\renewcommand{\iff}{\Leftrightarrow}
\newcommand{\Nat}{{\mathbb N}}
\newcommand{\Int}{{\mathbb Z}}
\newcommand{\Z}{{\mathbb Z}}
\newcommand{\Rat}{{\mathbb Q}}
\newcommand{\D}{{\mathbb D}}
\newcommand{\set}[1]{\{#1\}}
\newcommand{\xor}{\div}
\newcommand{\Aa}{{\mathcal A}}
\newcommand{\Bb}{{\mathcal B}}
\newcommand{\Ll}{{\mathcal L}}
\newcommand{\Ww}{{\mathcal W}}
\newcommand{\ign}[1]{}
\newcommand{\myunderbrace}[2]{\underbrace{#1}_{\mathclap{\txt{\scriptsize #2}}}}
\newcommand{\atoms}{\mathbb A}
\DeclareMathOperator{\lin}{\mathsf{Lin}}
\DeclareMathOperator{\spanvec}{\mathsf{Span}}
\DeclareMathOperator{\length}{\mathsf{length}}
\newcommand{\field}{\mathbb F}
\newcommand{\red}[1]{#1}
\newcommand{\blue}[1]{#1}
\newcommand{\schutz}{Sch\"utzenberger\xspace}
\newcommand{\poly}[1]{\mathrm{poly}(#1)}
\newcommand{\exptime}{{\sc{ExpTime}}\xspace }
\newcommand{\twoexpspace}{{\sc{2ExpSpace}}\xspace }
\newcommand{\twoexptime}{{\sc{2ExpTime}}\xspace }
\newcommand{\expspace}{{\sc{ExpSpace}}\xspace }
\newcommand{\pspace}{{\sc{PSpace}}\xspace }
\newcommand{\finsup}{\stackrel{\mathrm{fs}}{\to}}
\newcommand{\forms}[1]{#1 \finsup \field}
\newcommand{\formscrammed}[1]{#1 \! \finsup \! \field}
\title{Orbit-Finite-Dimensional Vector Spaces  and Weighted Register Automata}
\begin{document}

\maketitle

\begin{abstract}
We develop a theory of vector spaces spanned by orbit-finite sets. Using this theory, we give a decision procedure for equivalence of weighted register automata,  which are the common generalization of weighted automata and register automata for infinite alphabets. The algorithm runs in exponential time, and in polynomial time for a fixed number of registers. As a special case, we can decide, with the same complexity, language equivalence for unambiguous register automata, which  improves previous results in three ways:
(a) we allow for order comparisons on atoms, and not just equality;
(b) the complexity is exponentially better; and (c) we allow automata with guessing. 
\end{abstract}

\section{Introduction}
Weighted automata over a field were introduced in~\cite{schutzenberger1961definition} by \schutz. Such an automaton is defined in the same way as a nondeterministic automaton, with a set $Q$ of states and an input alphabet $\Sigma$, except that instead of having subsets for transitions, initial and final states, the automaton has weight functions into the underlying field:
\begin{align*}
    \myunderbrace{I \colon Q \to \field}{initial} 
    \quad 
    \myunderbrace{\delta \colon Q \times \Sigma \times Q \to \field}{transition}
    \quad 
    \myunderbrace{F \colon Q \to \field}{final}.
\end{align*}
The weight of a run is obtained by multiplying the weights along all transitions, the initial weight of the first state, and the final weight of the last state. The  automaton recognizes a weighted language, which  is the function $L \colon \Sigma^* \to \field$ that maps a word to the sum of weights of all runs on that word.   

\schutz proved that weighted automata can be minimized~\cite[Sec.~B]{schutzenberger1961definition},
which provides a polynomial time algorithm for equivalence.
In contrast, it is undecidable whether some field element $a \in \field$ is achieved as the weight of some  word,  already in the special case of weighted automata that are probabilistic~\cite[Thm.~22]{nasu1969mappings}. Equivalence  is undecidable for weighted automata over semirings that are not fields, e.g.~for the min-plus semiring~\cite[Cor.~4.3]{krob1992}.

One application of weighted automata (see~\cite{balle2015learning} for other ones) is a polynomial time algorithm for  language equivalence of unambiguous automata, i.e.~nondeterministic finite automata that have at most one accepting run for every  word. The algorithm is a simple reduction to equivalence of weighted automata:  a nondeterministic automaton can be viewed as a weighted automaton over the field of rational numbers, such that the weight of a  word is the number of accepting runs. For unambiguous automata, the number of accepting runs is either zero or one, and hence two unambiguous automata accept the same words if and only if the corresponding weighted automata are equivalent.

In this paper, we generalize weighted automata to infinite alphabets, motivated by the study of register automata, in particular the equivalence problem for unambiguous register automata~\cite{clementeUnamb2021, Colcombet15, MottetQ19Guessing, MottetQ19}.  The kinds of infinite alphabets that we study are constructed  using an infinite set $\atoms$ of atoms (also called data values) which can only be accessed in very limited ways; in the simplest case, they can only be compared for equality. Register automata are like finite  automata, except that they additionally use finitely many registers to store atoms that occurred in the  word.
They were introduced by Kaminski and Francez~\cite[Def.~1]{kaminskiFiniteMemoryAutomata1994}, under the name of \emph{finite memory automata}. This model has attracted much attention, and register automata are now one of the most widely studied  infinite state systems. Their decidability landscape is rather complex: for example, emptiness is decidable for nondeterministic register automata~\cite[Thm.~1]{kaminskiFiniteMemoryAutomata1994}, but universality is not~\cite[Thm~5.1]{nevenFiniteStateMachines2004}. More robust results can be achieved for deterministic register automata, but at a considerable loss of expressive power.

One of the numerous (unfortunately non-equivalent) variants of register automata are unambiguous register automata, for which  language equivalence was shown in~\cite[Thm.~1]{MottetQ19} to be  decidable in \twoexpspace, and in \expspace for a fixed number of registers. 
 These upper bounds were improved to \twoexptime and \exptime, respectively, in~\cite[Thm.~2]{clementeUnamb2021}. None of these proofs use weighted automata.

From the point of view of register automata, the main contributions of this paper are:
\begin{enumerate}
    \item We introduce a weighted version of register automata, and we prove that their equivalence problem can be solved in \exptime, and in polynomial time for a fixed number of registers.
    \item We show that our weighted automata have a robust theory, in particular they can be described in several different ways and admit canonical syntactic automata. 
    \item As an immediate application of our equivalence algorithm for weighted automata, we show that the language equivalence problem for unambiguous register automata can be solved in the same complexity, improving  exponentially on prior work~\cite{clementeUnamb2021, MottetQ19}.
    \item In~\cite{clementeUnamb2021, MottetQ19}, the  equivalence algorithms  work only  for automata which are non-guessing, in the sense that every pair (state, input letter) has finitely many outgoing transitions. Without the non-guessing assumption, decidability is only known in the case where one of the automata has just a single register~\cite[Thm.~10]{MottetQ19Guessing}. Our algorithm can be adapted so that it works for general unambiguous automata, without the non-guessing assumption, with the same complexity.
    \item Apart from atoms with equality only, our algorithm for weighted automata, and its applications to unambiguous automata, also work for atoms equipped with a total order. Previous algorithms for ordered atoms assume that  one of the automata has a single register~\cite[Thm.~2]{MottetQ19}.
\end{enumerate}

In our opinion, perhaps the most interesting aspect of this paper is not the applications described above, but the theory that is developed in order to obtain them: elements of linear algebra in the universe of sets with atoms~\cite{bojanczyk_slightly2018}, also known as nominal sets~\cite{pitts}.
Our central objects of study are vector spaces which are spanned by orbit-finite set of vectors. A typical example is the vector space $\lin (\atoms^k)$,  which consists of finite linear combinations of $k$-tuples of atoms. Such a vector space has two kinds of structure: we can take linear combinations of vectors, and we can apply atom automorphisms to them.  A natural concept -- which appears  in our equivalence algorithm for weighted automata -- is the  \emph{equivariant subspaces}, i.e.~subsets of the vector space  closed under both linear combinations and atom automorphisms. Our principal technical result is that for every $k$, there is a finite (exponential in $k$) bound on the maximal length of chains
\begin{align*}
\myunderbrace{V_1 \subsetneq V_2 \subsetneq \cdots \subsetneq V_n}{equivariant subspaces} \subseteq \lin (\atoms^k).
\end{align*}
This bound is enough to obtain an algorithm for checking equivalence on weighted automata. For other problems, such as automata minimization, we need to go beyond vectors spaces of the form $\lin (\atoms^k)$ and consider the more general class of spaces which are spanned by orbit-finite sets. Since the axiom of choice fails for sets with atoms (see~e.g.~\cite[Sec.~2.7]{pitts}), these orbit-finitely spanned spaces may not admit equivariant bases. On the other hand, they are closed under subspaces, Cartesian products, tensor products and dual spaces. We hope that these results can be used in future work, beyond applications to unambiguous automata. 

The structure of the paper is as follows. In Section~\ref{sec:atoms} we recall the basics of sets with atoms and orbit-finite sets, with particular focus on equality and ordered atoms. In Section~\ref{sec:wof-aut} we introduce weighted orbit-finite automata and state our main result about checking their equivalence, Theorem~\ref{thm:equivalence-orbit-finite-equality}. In Section~\ref{sec:length} we leave automata aside, and we study vector spaces of the form $\lin Q$, for orbit-finite sets $Q$. We prove, in Theorem~\ref{thm:ordered-length} and Corollary~\ref{cor:equality-length}, that these spaces have finite length for equality and ordered atoms. These results are used in Section~\ref{sec:equivalence} to provide an algorithm for checking equivalence, thereby proving Theorem~\ref{thm:equivalence-orbit-finite-equality}. In Section~\ref{sec:vector-spaces} we study the general notion of orbit-finitely spanned vector spaces and prove their numerous closure properties. This is used in Section~\ref{sec:minimisation} to introduce the class of orbit-finitely spanned automata, which have the same expressive power as weighted orbit-finite automata but admit minimization. As another application, in Section~\ref{sec:unamb} we develop a new algorithm for equivalence checking for unambiguous register automata.

This is an extended version of the conference paper~\cite{lics21}. In addition to full proofs of several results, including a new and vastly simplified proof of Theorem~\ref{thm:ordered-length}, it settles several questions left open in~\cite{lics21}. In particular, we give an exponential lower bound on length (Section~\ref{sec:lowerbound}), an example of an atom structure where the finite length property fails (Section~\ref{sec:vectoratoms}), and a proof that orbit-finitely spanned spaces over graph atoms are not closed under duals (Example~\ref{ex:counter-example-graph-atoms} in Section~\ref{sec:vector-spaces}).

\section{Orbit-finite sets}
\label{sec:atoms}

Our paper is based on the approach to register automata that uses \emph{sets with atoms}. The idea is to consider automata where all components are closed under atom automorphisms, and the states and input alphabets have finitely many elements up to atom automorphisms. This abstract view avoids cumbersome notation for representing states and transitions of register automata. We can also leverage existing results that treat atoms with more structure than equality only, e.g., with a total order.

 Our notation is based on~\cite{bojanczyk_slightly2018}. 
Fix a countably infinite relational structure $\atoms$, in the sense of model theory, i.e.~an underlying set equipped with some relations. Elements of this fixed structure will be called \emph{atoms}. In this paper we  mostly focus on two such structures:
\begin{itemize}
\item the {\em equality atoms} $\atoms=(\Nat,=)$, and
\item the {\em ordered atoms} $\atoms=(\Rat,<)$.
\end{itemize}
The main results of this paper, notably the finite length property in Section~\ref{sec:length}, are proved  for these atoms only, and we leave other atoms as future work.

An \emph{atom automorphism}  is any bijection of the underlying set of atoms which is consistent with all relations in $\atoms$. For the equality atoms this is any permutation of $\Nat$; for the ordered atoms, any order-preserving bijection on $\Rat$. 
In both settings, the set of all atoms is orbit-finite, which means that it has finitely many elements up to atom automorphisms. Other orbit-finite sets constructed with atoms include:
\begin{align}\label{eq:examples-of-orbit-finite}
\myunderbrace{\atoms^k}{$k$-tuples of atoms} 
\qquad  \qquad \qquad \myunderbrace{\atoms^{(k)}}{non-repeating \\
\scriptsize $k$-tuples of atoms} \qquad \qquad  \qquad
\myunderbrace{\textstyle{\atoms \choose k}}{sets of $k$ atoms}.
\end{align}

 To formally define ``sets constructed with atoms'' and the orbit-finite restriction, we use the cumulative hierarchy from set theory. (See~\cite[Sec.~3.1]{bojanczyk_slightly2018} or~\cite[Sec.~2.6]{pitts} for a more detailed presentation.)
The \emph{cumulative hierarchy (over atoms $\atoms$)}  is indexed by ordinal numbers, and defined as follows: on level 0 we find the atoms, and on the level indexed by an ordinal $\alpha>0$ we find the atoms and all sets whose elements are from levels $< \alpha$. For example, on level 1 we find every subset of the atoms.  Automorphisms $\pi : \atoms \to \atoms$ of the atoms act on sets in the cumulative hierarchy in the expected way. 

In the cumulative hierarchy one can encode data structures such as tuples, words, relations, functions etc., using standard set-theoretic machinery.
Automorphisms $\pi$ then act on tuples as expected: $\pi((x_1, \ldots, x_n)) = (\pi(x_1), \ldots, \pi(x_n))$.

For a tuple of atoms $\bar a \in \atoms^*$, an {\em $\bar a$-automorphism} is an atom automorphism $\pi$ such that $\pi(\bar a)=\bar a$. A tuple $\bar a$ is said to \emph{support} an element $x$ of the cumulative hierarchy if
$\pi(x) = x$ for every $\bar a$-automorphism $\pi$.
    We say that an element $x$ of the cumulative hierarchy is \emph{finitely supported} if it is supported by some tuple of atoms in $\atoms^*$. An element is called {\em equivariant} if it is supported by the empty tuple. An equivariant set may contain non-equivariant elements; e.g., the set of all atoms is equivariant, but its elements are not. 
    A \emph{set with atoms} is defined to be a set in the cumulative hierarchy which is finitely supported and with each of its members (and their members, and so on) finitely supported.

Sets with atoms provide a relaxed notion of finiteness, called orbit-finiteness.
We say that two sets with atoms $x$ and $y$ are in the same $\bar a$-orbit if there is some $\bar a$-automorphism $\pi$ such that $y = \pi(x)$. 
Being in the same $\bar a$-orbit is clearly an equivalence relation; its equivalence classes are called {\em $\bar a$-orbits}. A set with atoms $x$ is called \emph{orbit-finite} if there is some atom tuple $\bar a$ such that $x$ is a finite union of $\bar a$-orbits.

\begin{example}\label{ex:sets-with-atoms}
\begin{itemize}
\item Over equality atoms, the set $\atoms$ of atoms is orbit-finite, also for every $k \in \set{1,2,\ldots}$ the sets $\atoms^{(k)}$ and $\atoms\choose k$ from~\eqref{eq:examples-of-orbit-finite} are orbit-finite, with a single equivariant orbit each. The set $\atoms^k$ is also orbit-finite, with the number of equivariant orbits equal to the $k$-th Bell number, i.e., the number of equivalence relations on the set $\{1,\ldots,k\}$.
\item Over ordered atoms, the sets $\atoms$ and $\atoms\choose k$ have single equivariant orbit each. The set $\atoms^{(k)}$ is also orbit-finite, with $k!$ equivariant orbits; one of these orbits is the  increasing $k$-tuples of atoms. Each of these orbits is in equivariant bijection with $\atoms\choose k$.
\item Over ordered atoms, for every atom $a \in \atoms$ the set $\atoms - \set{a}$ has two $a$-orbits: the open intervals $(-\infty;a)$  and $(a;\infty)$.
\item The set $\atoms^*$ is not orbit-finite, neither for equality nor for ordered atoms, as words of different length are necessarily in distinct orbits.
\end{itemize}
\end{example}

An atom structure $\atoms$ is called \emph{oligomorphic} if for every $k$, the set $\atoms^k$ is orbit-finite. For oligomorphic atoms, orbit-finite sets behave well, e.g.~they are closed under finite products and finitely supported subsets~\cite[Lem.~3.24]{bojanczyk_slightly2018}. In this paper we only consider oligomorphic atoms; in particular, equality and ordered atoms are oligomorphic.

Orbit-finite sets can be represented in a finite way so that they can be used as inputs for algorithms. Two examples of such representations are set builder expressions~\cite[Sec.~4.1]{bojanczyk_slightly2018} or the $G$-set representation~\cite[Sec.~8]{BKL14}. The reader does not need to know these representations in detail; the important thing is that they support basic operations such as products, Boolean operations, or inclusion and membership checks.  

\section{Weighted orbit-finite automata}\label{sec:wof-aut}
We now introduce the main model for this paper, an orbit-finite generalization of weighted automata.
\begin{definition}[Weighted orbit-finite automaton]\label{def:weighted-orbit-finite} 
Fix an oligomorphic atom structure $\atoms$ and a field $\field$. 
    A \emph{weighted orbit-finite automaton} consists of orbit-finite sets $Q$ and $\Sigma$, called the \emph{states} and the {\em alphabet}, and finitely supported functions 
    \begin{align*}
        \myunderbrace{I : Q \to \field}{initial} 
        \quad 
        \myunderbrace{\delta : Q \times \Sigma \times Q \to \field}{transitions}
        \quad 
        \myunderbrace{F : Q \to \field}{final}.
    \end{align*}
    Furthermore, we always require the \emph{non-guessing condition:}
    \begin{itemize}
    \item[(*)]
        there are finitely many states with nonzero initial weight, and also for every state $q$ and input letter $a$, there are finitely many states $p$ such that the transition $(q,a,p)$ has nonzero weight.
    \end{itemize}
\end{definition}

Weights of runs and  words is defined in the same way as for the classical model of weighted automata with finitely many states. 
The non-guessing condition ensures that each word has only finitely many runs with non-zero weight; otherwise there could be difficulties in summing up the weights of infinitely many runs. Intuitively, the condition means that an automaton cannot spontaneously invent a fresh atom.

\begin{example}\label{ex:count-letters} Consider any oligomorphic atoms $\atoms$ and the field of rational numbers. We define a weighted orbit-finite automaton  which maps a word $w \in \atoms^*$ to the number of distinct atoms that appear in $w$. The states are
    \newcommand*{\theinitialstate}{\bot}
    \begin{align*}
      \myunderbrace{\set{ \theinitialstate }}{initial weight 1\\
    \scriptsize final weight 0} \qquad  + \qquad   \myunderbrace{\atoms}{initial weight 0\\
    \scriptsize final weight 1}
    \end{align*}
     for some $\bot\not\in\atoms$. The transition weight is $1$ for all triples
    \begin{align*}
    (\theinitialstate, a, \theinitialstate) \  (\theinitialstate, a, a) \ (a, b, a) \qquad \text{for } a \neq b \in \atoms,
    \end{align*}
    and $0$ for all other triples. 
    For every input word, all runs have weight $0$, except for the following runs, which have weight $1$: start in $\theinitialstate$, stay there until  the last occurrence of some atom $a$, and then stay in state $a$ until the end of the word. Since the number of runs with weight $1$ is the number of distinct atoms $a$ that appear in the word, the output of the automaton is the number of distinct atoms. 
    
This automaton is equivariant, in the sense that its state space and all three weight functions are equivariant.      
\end{example}

\begin{example} \label{ex:count-runs} The automaton in the previous example is a special case of a more general construction: counting accepting runs of a nondeterministic automaton. 
    Define a \emph{nondeterministic orbit-finite automaton} like a nondeterministic finite automaton, except that all components (the alphabet, states, transitions, initial and accepting sets) are required to be orbit-finite sets, see~\cite[Def.~5.7]{bojanczyk_slightly2018}. To such an automaton $\Aa$ one can associate a weighted orbit-finite automaton (over the field of rational numbers) as follows: the alphabet and states are the same, and:
\begin{itemize}
\item a state gets an initial weight $1$ if it was initial in $\Aa$, otherwise its initial weight is $0$;
\item a state gets an final weight $1$ if it was accepting in $\Aa$, otherwise its initial weight is $0$;
\item transition weight is $1$ for all transitions present in $\Aa$ and $0$ for the ones absent in $\Aa$.
\end{itemize}
This  weighted automaton maps a word to the number of accepting runs of $\Aa$. The construction makes sense only if $\Aa$ is non-guessing in the sense that there are finitely many inputs states, and for every pair (state, letter) there are finitely many outgoing transitions. 
\end{example}

As mentioned in Section~\ref{sec:atoms}, orbit-finite sets can be represented in a finite way. Therefore, weighted orbit-finite automata can also be represented in a finite way (assuming that field elements can be represented in a finite way).  This is because (a) a finitely  supported relation  on an orbit-finite set is also orbit-finite; and (b) a finitely supported function from an orbit-finite set to any set is itself an orbit-finite set.

\begin{example}\label{ex:register-automata}
A special case of weighted orbit-finite automata, where the finite representation is easier to see, is a  weighted $k$-register automaton. This is a weighted orbit-finite automaton where the input alphabet is  finitely many disjoint copies of the atoms, the states are finitely many copies of $(\atoms + \set{\bot})^k$, and all weight functions are equivariant. For equality and ordered atoms, the weight functions can be finitely represented using quantifier-free formulas, see~\cite[p.~6]{bojanczyk_slightly2018}; as a result, weighted register automata appear as a straightforward weighted version of register automata as introduced in~\cite{kaminskiFiniteMemoryAutomata1994}. We will focus on this special case in Section~\ref{sec:unamb}.
\end{example}

We now state the main result of this paper, which is an algorithm for checking the equivalence of weighted orbit-finite automata, 
assuming  that the atom structure is either the equality atoms or the ordered atoms. We do not know if the problem is decidable for other atom structures.

\begin{theorem}\label{thm:equivalence-orbit-finite-equality}  Assume that the atoms are the equality atoms $(\Nat, =)$ or the ordered atoms $(\Rat, <)$. The  equivalence problem for equivariant\footnote{This theorem would also work for finitely supported automata, but the notation for the complexity and orbit counts would be more involved.} weighted orbit-finite automata can be solved in deterministic time 
    \begin{align*}
    2^{\poly k} \cdot n^{O(k)}
    \end{align*}
    where 
    \begin{itemize}
        \item[$n$]  is the orbit count, i.e.~the number  of equivariant orbits in the disjoint union of the two  state sets;
        \item[$k$] is the atom dimension of the state spaces of the automata, i.e.~the smallest $k$ such that every state in both automata is supported by at most $k$ atoms.
        \end{itemize}
In particular, the equivalence problem is in \exptime, and polynomial time when the atom dimension~$k$ is fixed.
\end{theorem}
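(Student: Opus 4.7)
The plan is to follow \schutz's classical algorithm for equivalence of weighted automata via the forward reachable subspace, lifted to vector spaces spanned by orbit-finite sets, with termination and complexity coming from the finite length property of Section~\ref{sec:length}. Given equivariant automata $A_1, A_2$ with state sets $Q_1, Q_2$, I form a single automaton on $Q = Q_1 + Q_2$ with initial vector $I \in \lin(Q)$ defined by $I|_{Q_1} = I_1$ and $I|_{Q_2} = -I_2$, transition maps $M_a : \lin(Q) \to \lin(Q)$ induced by $\delta_1$ and $\delta_2$ componentwise, and final functional $F : \lin(Q) \to \field$ given by $F|_{Q_1} = F_1$ and $F|_{Q_2} = F_2$. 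By linearity and the non-guessing condition, $M_w I$ is a well-defined element of $\lin(Q)$ for every $w \in \Sigma^*$, and $A_1 \equiv A_2$ holds iff $F(M_w I) = 0$ for all $w$.

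Equivalently, the automata agree iff the forward reachable subspace
\begin{equation*}
V \;=\; \spanvec\{M_w I : w \in \Sigma^*\} \;\subseteq\; \lin(Q)
\end{equation*}
is contained in $\ker F$. Since $I$, each $M_a$, and $\Sigma$ are equivariant, $V$ is an equivariant subspace, and I compute it as the stabilisation of the ascending chain
\begin{equation*}
V_0 \subseteq V_1 \subseteq \cdots, \qquad V_0 = \spanvec(\{I\}), \quad V_{i+1} = V_i + \spanvec\{M_a v : a \in \Sigma,\; v \in V_i\}.
\end{equation*}
Because $Q$ has $n$ equivariant orbits and atom dimension at most $k$, the space $\lin(Q)$ decomposes as a direct sum of equivariant quotients of copies of $\lin(\atoms^k)$, and the finite length property of Section~\ref{sec:length} bounds the length of any strictly ascending chain of equivariant subspaces by $n \cdot 2^{\poly k}$. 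The chain therefore stabilises within that many iterations, and containment of the limit in $\ker F$ gives the final yes/no answer.

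To make each iteration effective I would represent an equivariant subspace by finitely many generating orbits of vectors in $\lin(Q)$, each specified by an orbit representative together with its stabiliser and a finitely supported coefficient pattern. The primitives needed are: pushing a generating orbit through the equivariant family $\{M_a\}_{a \in \Sigma}$, adjoining a new generating orbit to a subspace, linear-independence testing against a current basis, and containment in $\ker F$. Once an orbit representative is fixed, each primitive reduces to ordinary linear algebra over $\field$ on matrices whose size is polynomial in $n$ with exponent depending on $k$, with an additional $2^{\poly k}$ overhead coming from the Bell-number-type enumeration of orbits of $\atoms^k$ for equality atoms (and an analogous polynomial enumeration of order types for ordered atoms).

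The main obstacle is the per-step complexity analysis: exhibiting a representation of equivariant subspaces for which sums, images under $M_a$ and inclusion tests can all be performed within $n^{O(k)} \cdot 2^{\poly k}$, so that multiplying by the chain length bound yields the advertised $2^{\poly k} \cdot n^{O(k)}$ total time. The chain length bound from Section~\ref{sec:length} is used as a black box, and the same argument goes through uniformly for equality atoms and for ordered atoms, since both the bound and the required orbit enumeration hold in both settings; the \exptime upper bound and the polynomial-time bound for fixed $k$ then follow immediately.
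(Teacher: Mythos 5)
Your reduction to zeroness, the chain $V_0 \subseteq V_1 \subseteq \cdots$ of equivariant subspaces spanned by configurations of length-bounded words, and the appeal to the finite length property to get stabilisation within $2^{\poly k}\cdot n$ steps are all exactly the paper's first half (Lemma~\ref{lem:short-distinguishing-inputs}). Up to that point the two arguments coincide.

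Where you diverge — and where there is a genuine gap — is in the passage from ``the chain stabilises'' to a deterministic algorithm. You propose to \emph{compute} the chain directly, by representing each equivariant subspace $V_i$ via orbit-generators with stabilisers and finitely supported coefficient patterns, and then performing orbit-level linear algebra. You then explicitly flag as ``the main obstacle'' the need to show that sums, images under $M_a$, and inclusion tests can all be done within $n^{O(k)}\cdot 2^{\poly k}$ in this representation. That obstacle is not addressed, and it is not a minor bookkeeping step: designing a representation of equivariant subspaces of $\lin Q$ on which these operations are effective and cost-bounded is substantive work that the paper deliberately avoids. As it stands, your argument establishes correctness of the chain-stabilisation characterisation but does not establish the claimed deterministic complexity, which is the actual content of the theorem.

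The paper closes this gap by a different route that never manipulates equivariant subspaces at all. From the chain bound one concludes that any distinguishing word has length at most $\ell = 2^{\poly k}\cdot n$, hence uses at most $\ell$ atoms. By equivariance, it suffices to test equivalence on words supported by one fixed tuple $\bar a$ of $\ell$ atoms. The non-guessing condition then guarantees that, on such inputs, both automata stay within the $\bar a$-supported states, which form a \emph{finite} set of size at most $n\cdot\ell^k$. One therefore restricts both automata to $\bar a$-supported states and letters, obtaining ordinary finite weighted automata, and runs \schutz's polynomial-time equivalence algorithm (Lemma~\ref{lem:reduction-to-finite}). This sidesteps symbolic computation with equivariant subspaces entirely and makes the $2^{\poly k}\cdot n^{O(k)}$ bound immediate. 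If you wanted to rescue your version, you would need to actually work out the data structure and per-operation cost for equivariant subspaces; the paper's fix-a-tuple-and-go-finite trick is the cleaner way to discharge the deterministic complexity claim.
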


A lower bound for the problem is \pspace, which is the complexity of language equivalence of deterministic register automata. We do not know the exact complexity; it is worth pointing out that for weighted finite automata there is an equivalence algorithm in randomized polylogarithmic parallel time~\cite[Sec.~3.2]{kiefer2013} that uses the Isolating Lemma.

We now present a proof strategy for the theorem, which will be carried out in the next sections.  The first observation is that the equivalence problem reduces to the zeroness problem, which asks whether a single weighted orbit-finite automaton outputs zero for every  word. The reduction is as follows: given two equivariant weighted orbit-finite automata $\Aa_1$ and $\Aa_2$, we create a new weighted orbit-finite automaton $\Aa_1 - \Aa_2$, which is obtained by taking the disjoint union of $\Aa_1$ and $\Aa_2$, and flipping the  sign of the final weights in $\Aa_2$. The new automaton maps all  words to zero if and only if the original two automata were equivalent. Also, in the reduction, the atom dimension does not change, neither does the orbit count.

It remains to give an algorithm for zeroness of a single equivariant  weighted orbit-finite automaton, with states $Q$. Our proof follows the same lines as \schutz's algorithm. Write $\lin Q$ for the set of  finite linear combinations (over the field $\field$) of states, seen as a vector space with basis $Q$.  For an input   word $w$, define its \emph{configuration} to be the vector in $\lin Q$
which maps a state $q$ to the sum of pre-weights of all runs over $w$ that end in state $q$; the pre-weight of a run is defined by multiplying the initial weight of the first state and the weights of all transitions, without taking into account the final weights. Thanks to the non-guessing condition (*) in Definition~\ref{def:weighted-orbit-finite}, each configuration is indeed a finite linear combination, since there are finitely many runs with nonzero pre-weight.  Consider  the  chain
\begin{align*}
V_0 \subseteq V_1 \subseteq V_2 \subseteq \cdots \subseteq \lin Q
\end{align*}
where $V_i$ is the  subspace of $\lin Q$ that is spanned by the configurations of   words of length at most  $i$. (We do not intend to compute the subspaces in this chain, whatever that would mean;  the chain is only used in the analysis of the algorithm.) Because the automaton is equivariant, one can easily see that each subspace $V_n$ is also equivariant. 

In the finite-dimensional case studied by \schutz, where $Q$ was a finite set, we could conclude that the chain must stabilize in a number of steps that is bounded by the finite dimension of the vector space $\lin Q$. In the orbit-finite case, the vector space has infinite dimension, and thus it is not clear why the chain should stabilize in finitely many steps. Our main technical contribution is a proof that the chain does indeed stabilize, and furthermore the time to stabilize is consistent with the bounds in the statement of Theorem~\ref{thm:equivalence-orbit-finite-equality}.  This stabilization property is the subject of the next section, and in Section~\ref{sec:equivalence} we shall use the property to conclude the proof of Theorem~\ref{thm:equivalence-orbit-finite-equality}.

\section{Finite length property}
\label{sec:length}

A standard result in linear algebra says that a vector space has finite dimension if and only if it has finite length, where the  length is defined to be the longest chain of its subspaces.  Indeed, the length of a finite-dimensional space is equal to its dimension. Because of this easy correspondence the notion of length is seldom explicitly applied to vector spaces, and it becomes more important only in more general structures such as modules over a ring. However, the situation becomes more interesting for chains of {\em equivariant} subspaces. We define:

\begin{definition}[Length]\label{def:length} 
The {\em length} of an equivariant vector space $V$, denoted $\length(V)$, is the maximal length $n$ of a chain of proper inclusions on equivariant subspaces of $V$:
    \begin{align*}
    V_1 \subsetneq V_2 \subsetneq \cdots \subsetneq V_n\subsetneq  V.    
    \end{align*}
If a maximal length does not exist, we say that $V$ has infinite length.     An atom structure $\atoms$ has the \emph{finite length property} if for every number $k$, the vector space $\lin \atoms^k$ has finite length.
\end{definition}

The finite length property easily implies oligomorphicity. As we shall see in Section~\ref{sec:vectoratoms}, the converse implication does not hold. But first, let us consider the two basic atom structures. 

\subsection{Equality and ordered atoms}

The purpose of this section is to prove that both the equality atoms $(\Nat,=)$ and the ordered atoms $(\Rat, <)$ have the finite length property (Corollary~\ref{cor:equality-length} and Theorem~\ref{thm:ordered-length}, respectively). 
 Furthermore, the length of $\lin \atoms^k$ grows exponentially (and not worse) with $k$. The results of this section apply to an arbitrary field $\field$. 

 Definition~\ref{def:length} speaks of equivariant subspaces. Later, in Theorem~\ref{cor:finsup-length}, we will show that  for the equality and ordered atoms,   allowing a fixed non-empty support would make no difference.

\begin{example}\label{ex:moerman-space}
	Consider the equality atoms. The vector space $\lin \atoms$ has length 2, because it has exactly 3 equivariant subspaces which form a chain of two proper inclusions:
	\begin{align*}
	\set{0} \quad \subsetneq \quad V \quad \subsetneq \quad \lin \atoms,
	\end{align*}
	where $V$ is the subspace spanned by the set
	\[\set{a-b : a \neq b \in \atoms}.\]
	Equivalently, $V$ is the vector space of all vectors where all coefficients sum up to $0$.
	Let us prove that there are no other equivariant subspaces. Suppose that $W$ is an equivariant subspace which contains some nonzero vector 
	\begin{align}\label{eq:wvector}
	w = \lambda_1 a_1 + \cdots + \lambda_n a_n \quad \text{where $\lambda_i \in \field\setminus\set{0}$.}
	\end{align}
	Since $W$ is equivariant, it also contains the vector obtained from $w$ by replacing $a_n$ with some fresh atom $b_n$. By taking the difference of these vectors and dividing by $\lambda_n$, we see that $W$ contains  $a_n - b_n$. By equivariance, $W$ contains all vectors of the form $a - b$ for distinct atoms $a,b$, and thus $V \subseteq W$.

We now show that $W$ is either $V$ or the entire space $\lin \atoms$. Indeed, suppose that $W$ contains some $w$ as in~\eqref{eq:wvector} that is not in $V$.
		If $n > 1$ then we can subtract from $w$ the  vector $\lambda_1 \cdot (a_1  - a_2)\in V$, which results in another vector that is in $W$ but not in $V$, with a smaller $n$. By repeating this process, we see that $W$ contains a vector of the form $\lambda_1 a_1$ with $\lambda_1 \neq 0$, and hence, by equivariance, it is the entire space.
\end{example}

Definition~\ref{def:length} differs from the classical definition of length in that we only consider equivariant subspaces. Some classical properties of length  easily transfer to our case. One may even say that our definition is a special case of the classical one: an equivariant vector space can be seen as a module over the (non-commutative) group ring $\field[G]$, where $\field$ is the underlying field and $G$ is the automorphism group of $\atoms$. To keep the presentation elementary we do not pursue this correspondence, but we remark that all properties of length which are valid for modules over (non-commutative) rings, remain true for our definition.
For example, the following lemma has the same proof (see Appendix~\ref{app:SES}) 
as for the classical notion of length of a module (see~e.g.~\cite[Prop.~4.12]{goodearl_warfield}), and works for arbitrary oligomorphic atoms:
\begin{lemma}\label{lem:SES}
For any equivariant spaces $V\subseteq W$, and equivariant sets $P$, $Q$ with their disjoint union $P+Q$: \begin{itemize}
\item[(i)]	$\length(W)=\length(V)+\length(W/V)$;\footnote{Here $W/V$ denotes the quotient space as usual.}
\item[(ii)]	$\length(\lin(P+Q)) = \length(\lin P) + \length(\lin Q)$;
\item[(iii)] if there is an equivariant surjective function from $P$ to $Q$ then $\length(\lin Q)\leq\length(\lin P)$.
\end{itemize}
\end{lemma}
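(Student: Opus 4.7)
My plan is to follow the classical proof of length additivity on modules (as in, e.g.,~\cite[Prop.~4.12]{goodearl_warfield}), relying on the observation that all required linear-algebraic operations preserve equivariance. If $V\subseteq W$ is equivariant then every atom automorphism descends to a well-defined action on the quotient $W/V$, the quotient map $\sigma\colon W\to W/V$ sends equivariant subspaces to equivariant subspaces and has equivariant preimages, and intersections of equivariant subspaces are equivariant. With these closure properties in place, the classical argument carries over verbatim.

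For (i), I would start with maximal chains $U_0\subsetneq\cdots\subsetneq U_m=V$ and $T_0\subsetneq\cdots\subsetneq T_l=W/V$ and concatenate them along $\sigma$ into the equivariant chain
\begin{equation*}
U_0\subsetneq\cdots\subsetneq U_m=V=\sigma^{-1}(T_0)\subsetneq\sigma^{-1}(T_1)\subsetneq\cdots\subsetneq\sigma^{-1}(T_l)=W,
\end{equation*}
witnessing $\length(W)\geq m+l$. For the matching upper bound I would take any equivariant chain $W_0\subsetneq\cdots\subsetneq W_n=W$, split it into the derived equivariant chains $V\cap W_i$ in $V$ and $\sigma(W_i)$ in $W/V$, and check that each strict inclusion $W_i\subsetneq W_{i+1}$ remains strict in at least one of the two derived chains: if both $V\cap W_i=V\cap W_{i+1}$ and $\sigma(W_i)=\sigma(W_{i+1})$ held, then any $w\in W_{i+1}$ would differ from some $u\in W_i$ by an element of $V\cap W_{i+1}=V\cap W_i\subseteq W_i$, forcing $w\in W_i$. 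Hence $n\leq\length(V)+\length(W/V)$.

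Part (ii) is an immediate specialization of (i) to $W=\lin(P+Q)$ and $V=\lin P$: the equivariant linear map $\lin(P+Q)\to\lin Q$ that zeroes out the $P$-summand identifies $W/V$ with $\lin Q$ equivariantly. For part (iii), an equivariant surjection $f\colon P\to Q$ extends by linearity to an equivariant surjection $\tilde f\colon\lin P\to\lin Q$, and the preimages under $\tilde f$ of any equivariant chain $V_0\subsetneq\cdots\subsetneq V_n=\lin Q$ form an equivariant chain of the same length in $\lin P$ (surjectivity of $\tilde f$ means that strict inclusions pull back to strict inclusions), giving $\length(\lin P)\geq\length(\lin Q)$. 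The only nontrivial step in the whole argument is the split-chain check in the reverse half of (i); everything else is bookkeeping that the standard constructions respect the equivariance constraint.
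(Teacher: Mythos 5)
Your proof is correct and follows essentially the same route as the paper's: concatenating preimage chains for the lower bound in (i), splitting a chain of $W$ along $V\cap W_i$ and $\sigma(W_i)$ for the upper bound, and specializing (i) to disjoint unions for (ii). For (iii) you use a direct preimage argument rather than invoking (i) with $V=\ker(\tilde f)$ as the paper does, but this is the same idea dressed differently.
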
 

From this we infer that in Definition~\ref{def:length}, we could have equivalently talked about arbitrary equivariant orbit-finite set, instead of just sets of the form $\atoms^k$:
\begin{corollary} For  atoms with the finite length property,  $\lin Q$ has finite length for every equivariant orbit-finite  $Q$.
\end{corollary}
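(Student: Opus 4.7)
The plan is to assemble the three parts of Lemma~\ref{lem:SES} with the finite length property. Since $Q$ is equivariant and orbit-finite, it decomposes as a disjoint union $Q = O_1 \sqcup \cdots \sqcup O_m$ of finitely many equivariant orbits. Iterating part~(ii) of Lemma~\ref{lem:SES} gives $\length(\lin Q) = \sum_{i=1}^m \length(\lin O_i)$, so it is enough to bound the length of $\lin O_i$ for each individual orbit.

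Fix an orbit $O_i$, pick any representative $o \in O_i$, and let $\bar a \in \atoms^{k}$ be a tuple supporting $o$ (for some $k$ depending on $i$). Let $O' \subseteq \atoms^{k}$ be the orbit of $\bar a$ under the automorphism group. Because $\bar a$ supports $o$, the rule $\pi(\bar a) \mapsto \pi(o)$ defines a function $f \colon O' \to O_i$ which is clearly equivariant and surjective. The set $O'$ is equivariant, and so is its complement inside $\atoms^{k}$, so part~(ii) of Lemma~\ref{lem:SES} applied to the decomposition $\atoms^{k} = O' \sqcup (\atoms^{k} \setminus O')$ gives
\[
\length(\lin O') \leq \length(\lin \atoms^{k}),
\]
which is finite by the finite length property. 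Part~(iii) of the same lemma then yields $\length(\lin O_i) \leq \length(\lin O') < \infty$. Summing these finite quantities over $i$ finishes the argument.

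There is no serious obstacle here; the proof is a direct assembly of the three clauses of Lemma~\ref{lem:SES} together with the hypothesis that $\lin \atoms^{k}$ has finite length for every $k$. The only verification worth double-checking is that the map $\pi(\bar a) \mapsto \pi(o)$ is well defined, which is immediate from the definition of support: if $\pi_1(\bar a) = \pi_2(\bar a)$ then $\pi_2^{-1}\pi_1$ is an $\bar a$-automorphism, hence fixes $o$, hence $\pi_1(o) = \pi_2(o)$.
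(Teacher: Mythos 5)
Your proof is correct and follows essentially the same route as the paper: decompose $Q$ into finitely many equivariant orbits, obtain an equivariant surjection onto each orbit from inside $\atoms^k$, and combine Lemma~\ref{lem:SES} with the finite length property. Where the paper simply cites a known result for a surjection $\atoms^k \to O_i$, you construct one explicitly with domain the single orbit $O'$ of the supporting tuple, which costs one extra application of Lemma~\ref{lem:SES}(ii) but makes the argument self-contained.
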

\begin{proof}
	An orbit-finite set is a finite disjoint union of single-orbit sets, and for oligomorphic atoms every single-orbit set $Q$ is an image of a surjective function from $\atoms^k$~\cite{BKL14}, for $k$ the atom dimension of $Q$.
	Hence, we can use the closure properties from Lemma~\ref{lem:SES}.
\end{proof}

The following lemma is the key step in proving the finite length property for the equality and ordered atoms:

\begin{lemma}\label{lem:ordered-length-1o}
Consider the ordered atoms $\atoms=(\Rat,<)$. For every $k$, the length of $\lin{\atoms\choose k}$ is finite, and satisfies
\[
	\textstyle \length\big(\lin{\atoms\choose k}\big)
	\quad \leq \quad  1+k\cdot\length\big(\lin{\atoms\choose k-1}\big).
\] 
\end{lemma}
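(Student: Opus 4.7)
The plan is to exhibit an equivariant linear map from $\lin{\atoms\choose k}$ into a direct sum of $k$ copies of $\lin{\atoms\choose k-1}$ whose kernel has length at most~$1$. For each $i\in\{1,\ldots,k\}$, let $\phi_i\colon\lin{\atoms\choose k}\to\lin{\atoms\choose k-1}$ be the equivariant linear extension of the set map $\{a_1<\cdots<a_k\}\mapsto\{a_1,\ldots,a_k\}\setminus\{a_i\}$; these are equivariant because any order-preserving bijection of $\atoms$ preserves the ordinal rank of atoms inside a subset. Package them into a single map $\Phi=(\phi_1,\ldots,\phi_k)\colon\lin{\atoms\choose k}\to\bigoplus_{i=1}^k\lin{\atoms\choose k-1}$.

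Applying Lemma~\ref{lem:SES}(i) to the short exact sequence of equivariant spaces $0\to\ker\Phi\to\lin{\atoms\choose k}\to\im(\Phi)\to 0$ yields $\length(\lin{\atoms\choose k})=\length(\ker\Phi)+\length(\im(\Phi))$. Since $\im(\Phi)$ is an equivariant subspace of $\bigoplus_{i=1}^k\lin{\atoms\choose k-1}$, its length is at most the length of this ambient space, which equals $k\cdot\length(\lin{\atoms\choose k-1})$ by iterating Lemma~\ref{lem:SES}(ii). The target inequality therefore reduces to the single claim $\length(\ker\Phi)\leq 1$; equivalently, that $\ker\Phi$ has no proper nonzero equivariant subspace.

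For that claim, I would take any nonzero $v=\sum_S\lambda_S S\in\ker\Phi$ with finite atom support $A\subset\atoms$ and iteratively substitute one atom $a\in A$ by a fresh atom via an order-preserving automorphism $\pi$ and then subtract, so that $v-\pi(v)\in\ker\Phi$ still lies in the equivariant span of $v$ but has strictly smaller common atom support. Carefully iterated, this reduction should yield an \emph{elementary} kernel element---the analogue, for general $k$, of the 4-cycle $\{a<b\}-\{a<c\}-\{d<b\}+\{d<c\}$ in the case $k=2$ (and of the difference $a-b$ of Example~\ref{ex:moerman-space} in the case $k=1$). Conversely, one shows that the equivariant orbit of such an elementary element, together with field-linear combinations, spans all of $\ker\Phi$. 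Combined, these two directions say that every nonzero $v$ generates the whole kernel, so $\ker\Phi$ is either zero or simple, and in particular of length at most~$1$.

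I expect the simplicity of $\ker\Phi$ to be the main obstacle. Because $\phi_i$ depends on the ordinal rank of atoms within a subset, the automorphism $\pi$ in the reduction step must be chosen delicately so that $v-\pi(v)$ both remains in $\ker\Phi$ and genuinely shrinks the common atom support. I would handle this by a double induction on $k$ and on $|A|$, exploiting the fact that over the ordered atoms one can always insert a fresh atom into any chosen gap of the ordering, and that Lemma~\ref{lem:SES}(iii) lets one transfer simplicity statements along equivariant surjections between single-orbit sets.
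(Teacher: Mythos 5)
Your maps $\phi_i$ are, once unwound, exactly the paper's maps $g_i$: for a vector $v$ and a $(k-1)$-set $S$, the coefficient $\phi_i(v)(S)$ collects the coefficients $v(S\cup\{a\})$ precisely over those $a$ that fall in the $i$-th open interval cut out by $S$, which is the $i$-th $S$-orbit in the list~\eqref{orbits}. Hence $\ker\Phi$ is exactly the paper's space $B$ of balanced vectors, and your ``elementary kernel elements'' are exactly the paper's cogs (your $k=2$ instance is $\nu^\alpha$ for $\alpha=\{a<d<b<c\}$). So the decomposition, the appeal to Lemma~\ref{lem:SES}, and the reduction to $\length(\ker\Phi)\le 1$ all coincide with the paper's argument.

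The genuine gap is in the simplicity claim, which you yourself flag as the main obstacle. The direction you sketch --- extracting an elementary element from an arbitrary nonzero $v\in\ker\Phi$ by iterating $v\mapsto v-\pi(v)$ with $\pi$ moving one atom --- does work and is the paper's Claim~\ref{clm:cogs-everywhere}; note, though, that this iteration does \emph{not} shrink the atom support of $v$ (each step introduces a fresh atom, enlarging the support), and you do not need $v\in\ker\Phi$ for it, the argument applies to any nontrivial equivariant subspace. What it does is impose a nested alternating-sign structure on $2k$ chosen atoms, producing a scalar multiple of a cog. The other direction --- that the cogs span all of $\ker\Phi$, so that the orbit of one cog equivariantly generates the whole kernel --- is where the real work lies, and your ``one shows'' hides it. The paper establishes this by a dimension count: over a fixed $n$-atom set $T$, the narrow cogs give ${n-k\choose k}$ linearly independent vectors (Claim~\ref{clm:cogs-count}), while $B\cap\lin{T\choose k}$ has dimension at most ${n-k\choose k}$ (Claim~\ref{clm:dimB}, a nontrivial matrix-rank recurrence in Appendix~\ref{app:dimB}). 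Your sketch offers no substitute for this count, and an atom-substitution induction on its own does not obviously yield it; without it, the claim $\length(\ker\Phi)\le 1$ is unproved and the lemma does not follow.
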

\begin{proof}
For any set $\alpha$ of $2k$ atoms:
\begin{align}\label{eq:alpha}
	\underbrace{a_1<b_1<\cdots<a_k<b_k}_\alpha
\end{align}
and for any $I\subseteq k = \{1,\ldots,k\}$, define $\alpha\rtimes I \in {\atoms\choose k}$ by:
\[
	\alpha\rtimes I = \{a_i \mid i\not\in I\}\cup \{b_i\mid i\in I\}.
\]
In words, $\alpha\rtimes I$ picks either $a_i$ or $b_i$ from $\alpha$ according to $I$.
Define the {\em cog} (on~$\alpha$), $\nu^\alpha\in \lin {\atoms\choose k}$, to be the vector:
\[
	\nu^\alpha = \sum_{I\subseteq k}(-1)^{|I|}(\alpha\rtimes I).
\]
For example, for $k=2$, the cog on $4$ atoms $\alpha=\{a_1,b_1,a_2,b_2\}$ ordered as in~\eqref{eq:alpha} is the vector:
\[
	\textstyle \nu^\alpha \quad=\quad \{a_1,a_2\}-\{a_1,b_2\}-\{b_1,a_2\}+\{b_1,b_2\} \quad\in\quad \lin{\atoms\choose 2}.
\]
Notice that, for a fixed $k$, all cogs form a single orbit in $\lin{\atoms\choose k}$.

\begin{claim}\label{clm:cogs-everywhere}
Every nontrivial equivariant subspace $V\subseteq \lin{\atoms\choose k}$ contains a cog.
\end{claim}
\begin{proof}
Pick any nonzero $v\in V$ and pick $\alpha=\{a_1,\ldots,a_k\}\in{\atoms\choose k}$ so that $v(\alpha)\neq 0$. Choose fresh atoms $b_1,\ldots,b_k$  to form a set as in~\eqref{eq:alpha}, so that no atom present in $v$ lies strictly between $a_i$ and~$b_i$ for any $i$. (This is possible because $\alpha$ and $v$ are both finite.) We say that an atom $c$ is {\em present} in $v$ if there is a set $\beta\in{\atoms\choose k}$ such that $v(\beta)\neq 0$ and $c\in\beta$.

For every $i=1,\ldots,k$, choose an atom automorphism $\pi_i$ such that:
\begin{itemize}
\item $\pi_i(a_i)=b_i$, 
\item $\pi_i(a_j)=a_j$ and $\pi_i(b_j)=b_j$ for $j\neq i$, and
\item $\pi_i(c)=c$ for all other atoms $c$ present in $v$.
\end{itemize}
This is possible thanks to our choice of the $b_i$'s. Then put $v_0=v$ and define  $v_1,\ldots,v_k\in V$ by induction:
\[
	v_i = v_{i-1} - \pi_i(v_{i-1}) 
\]
(note that $\pi_i(v_{i-1})\in V$ since $V$ is equivariant). It is easy to prove by induction on $i$ that: 
\begin{itemize}
\item $\big(\pi_i(v_{i-1})\big)(\alpha)=0$, hence
\item $v_{i}(\alpha) = v(\alpha)$;
\end{itemize}
in particular $v_k$ is nonzero.
Also by straightforward induction, each $v_i$ has the following properties for every $\beta\in{\atoms\choose k}$:
\begin{itemize}
\item $v_i(\beta)$ is nonzero only if for each $1\leq j\leq i$, $\beta$ contains either $a_j$ or  $b_j$ but not both,
\item if $\beta'$ arises from such $\beta$ by replacing $a_j$ with $b_j$ (or {\em vice versa}) for exactly one $j\leq i$, while keeping the other components unchanged, then $v_i(\beta)+v_i(\beta')=0$.
\end{itemize}
For $i=k$ this implies that $v_k$, divided by the scalar $v(\alpha)$, is a cog.
\end{proof}

Claim~\ref{clm:cogs-everywhere} implies that $\lin{\atoms\choose k}$ has a unique least nontrivial equivariant subspace: the one spanned by all cogs. We shall now give an explicit description of that subspace.

A vector $v\in\lin{\atoms\choose k}$ is called {\em balanced} if for every set $S\in{\atoms\choose{k-1}}$, and for every $S$-orbit $I\subseteq\atoms$ such that $I\cap S=\emptyset$:
\begin{align}\label{balanced}
	\textstyle \sum_{a\in I}v(S\cup\{a\})=0.
\end{align}
The above sum is formally infinite, but only finitely many summands in it are non-zero because~$v$ is a finite vector.
Note that if $S=\{a_1,\ldots,a_{k-1}\}$ where $a_1<\cdots<a_{k-1}$, then $S$-orbits disjoint from $S$ are exactly the $k$ open intervals:
\begin{align}\label{orbits}
	(-\infty,a_1),\ (a_1,a_2),\ldots,(a_{k-2},a_{k-1}),\ (a_{k-1},+\infty).
\end{align}
Balanced vectors form an equivariant subspace of $\lin{\atoms\choose k}$: if $v$ is balanced then $\pi(v)$ is balanced for every atom automorphism $\pi$, and the sum of two balanced vectors is balanced. We denote the space of balanced vectors by $B$. An immediate corollary of Claim~\ref{clm:cogs-everywhere} is that every cog is balanced.

\begin{claim}\label{clm:Bcogs}
$B$ is the subspace spanned by all cogs.
\end{claim}
\begin{proof}
It is enough to show that every balanced vector is a linear combination of cogs. So consider any nonzero vector $v\in B$. Let $\beta_1\sqsubseteq \beta_2\sqsubseteq\ldots\sqsubseteq\beta_M\in{\atoms\choose k}$ be the set of all sets that appear in $v$ with nonzero coefficients, ordered lexicographically. More precisely, we define $\beta\sqsubseteq\beta'$ if 
\begin{itemize}
\item $\min(\beta)<\min(\beta')$, or
\item $\min(\beta)=\min(\beta')=a$ and $\beta\setminus\{a\}\sqsubseteq\beta'\setminus\{a\}$.
\end{itemize}
Consider
\[
	\beta_{\textrm{min}} = \{a_1,\ldots,a_k\},
\]
the least set in $v$ in the lexicographic order. Assume $a_1<a_2<\cdots <a_k$. 

Since $v$ is balanced, there must be some $b_1\neq a_1$ with $b_1<a_2$ such that the set $\beta'=\{b_1,a_2,\ldots,a_k\}$ has a nonzero coefficient in $v$. (Consider $S=\{a_2,\ldots, a_{k}\}$ in \eqref{balanced}.) This means that $\beta_{\textrm{min}}\sqsubseteq\beta'$, and so $a_1<b_1$.

Symmetrically, there must be some $b_k\neq a_k$ with $b_k>a_{k-1}$ such that the set $\beta'=\{a_1,\ldots,a_{k-1},b_k\}$ has a nonzero coefficient in $v$. This means that $\beta_{\textrm{min}}\sqsubseteq\beta'$, and so $a_k<b_k$.

More generally, for every $1<i<n$ there must be some $b_i\neq a_i$ with $a_{i-1}<b_i<a_{i+1}$ such that the set $\beta_{\textrm{min}}$ with $a_i$ replaced by $b_i$ has a nonzero coefficient in $v$. Because $\beta_{\textrm{min}}$ is the least set in $v$ lexicographically, it must hold that $a_i<b_i$.

We obtain the following order of $2k$ atoms:
\[
	\underbrace{a_1<b_1<\cdots<a_k<b_k}_\alpha
\]
Note that every atom in this set is present in some set that appears in $v$ with a nonzero coefficient. 

Let $\nu^\alpha$ be the cog based on $\alpha$. Note that for every $\beta'$ with a nonzero coefficient in $\nu^\alpha$ we have $\beta_{\textrm{min}}\sqsubseteq \beta'$.

Let $\lambda\neq 0$ be the coefficient of $\beta_{\textrm{min}}$ in $v$.
Consider the vector
\[
	w = v-\lambda\cdot\nu^\alpha.
\]
This has the following crucial properties:
\begin{itemize}
\item every atom present in $w$ is also present already in $v$, i.e., no new atoms are introduced in~$w$,
\item for every $\beta'$ with a nonzero coefficient in $w$ we have $\beta_{\textrm{min}}\sqsubseteq \beta'$,
\item the coefficient of $\beta_{\textrm{min}}$ in $w$ is zero, i.e., $\beta_{\textrm{min}}$ does not appear in $w$.
\end{itemize}

By induction on the (finite) lexicographic order on $\atoms\choose k$ restricted  to atoms present in $v$, the vector $w$, and therefore also $v$, is a linear combination of cogs. This concludes the proof of Claim~\ref{clm:Bcogs}.
\end{proof}

Claims~\ref{clm:cogs-everywhere} and~\ref{clm:Bcogs} together imply that $B$ is the smallest nontrivial equivariant subspace of $\lin {\atoms\choose k}$. To finish the proof of Lemma~\ref{lem:ordered-length-1o}, for every $i\in\{1,\ldots,k\}$ consider the equivariant linear map:
\[
	\textstyle g_i:\lin{\atoms\choose k}\to\lin{\atoms\choose k-1}
\]
defined by:
\[
	\textstyle g_i(v)(S) = \sum_{a\in I}v(S\cup\{a\})
\]
for any $S\in{\atoms\choose{k-1}}$, where $I$ is the $i$'th orbit on the list~\eqref{orbits}. Tupling these functions for all $i$ we obtain an equivariant linear map:
\[
	\textstyle g:\lin{\atoms\choose k}\to\left(\lin{\atoms\choose k-1}\right)^k.
\]
By definition, the kernel of $g$ is $B$, so there is a subspace embedding:
\[
	\textstyle \left(\lin{\atoms\choose k}\right)/B \ \subseteq \ \left(\lin{\atoms\choose k-1}\right)^k.
\] 
Note that $B$ does not have any nontrivial equivariant subspaces, so $\length(B)=1$. 
By Lemma~\ref{lem:SES}(i) applied to $V=B$ and $W=\lin{\atoms\choose k}$ we thus obtain:
\begin{gather*}
	\textstyle \length\big(\lin{\atoms\choose k}\big) = 1+\length\left(\left(\lin{\atoms\choose k}\right)/B\right) \leq \\
	\textstyle \leq 1+\length\left(\left(\lin{\atoms\choose k-1}\right)^k\right) = 1+k\cdot\length\big(\lin{\atoms\choose k-1}\big)
\end{gather*}
and conclude the proof of Lemma~\ref{lem:ordered-length-1o}.
\end{proof}

The following is now easy:

\begin{theorem}\label{thm:ordered-length}
   The ordered atoms $\atoms=(\Rat,<)$ have the finite length property. For an equivariant orbit-finite set $Q$, the space $\lin Q$ has length at most
    \begin{align*}
   \text{(orbit count of $Q$)} \cdot (1+{\text{atom dimension of $Q$}})!
    \end{align*}
\end{theorem}
\begin{proof}
If $Q$ has only one orbit, then there is an equivariant surjective function from $\atoms\choose k$ to $Q$, where $k$ is the atom dimension of $Q$ (see~\cite[Lem.~3.20]{bojanczyk_slightly2018}). By induction on $k$, using Lemma~\ref{lem:ordered-length-1o}, we obtain
\begin{align}\label{eq:achoosebound}
	\textstyle \length\big(\lin{\atoms\choose k}\big) \leq (1+k)!
\end{align}
The theorem then follows by Lemma~\ref{lem:SES}(iii).
Finally, the case of multi-orbit $Q$ follows from Lemma~\ref{lem:SES}(ii).
\end{proof}

From this it is easy to deduce the finite length property for equality atoms. 

\begin{corollary}\label{cor:equality-length}
     The equality atoms $\atoms=(\Nat,=)$ have the finite length property. For an equivariant orbit-finite set $Q$, the space $\lin Q$ has length at most
    \begin{align*}
   \text{(orbit count of $Q$)} \cdot k!\cdot (1+k)!
    \end{align*}
where $k$ is the atom dimension of $Q$.
\end{corollary}
\begin{proof}
Extend  $\atoms$ with any total order isomorphic to that of the rational numbers. Any set $V$, equivariant over equality atoms, remains equivariant when considered as a set over ordered atoms along this order. In particular, if $V$ is a vector space, any chain of subspaces
\[
	V_1\subsetneq V_2 \subsetneq \cdots \subsetneq V
\]
equivariant over equality atoms, remains a chain over ordered atoms. As a result, the length of~$V$ over equality atoms does not exceed the length of $V$ over ordered atoms.

The set $\atoms^{(k)}$, a single-orbit set over equality atoms, when seen as an equivariant set over ordered atoms, has exactly $k!$ disjoint orbits. Each of these orbits is equivariantly (over ordered atoms) isomorphic to $\atoms\choose k$. Theorem~\ref{thm:ordered-length} then  implies that $\lin\big(\atoms^{(k)}\big)$, over equality atoms, has length at most $k!\cdot (1+k)!$.

Every single-orbit equivariant set $Q$ is an image of an equivariant surjective function from~$\atoms^{(k)}$, where $k$ is the atom dimension of $Q$; the lemma follows by Lemma~\ref{lem:SES}(iii). The case of multi-orbit $Q$ follows from Lemma~\ref{lem:SES}(ii).
\end{proof}

\subsection{Length of finitely supported chains}

So far, we have bounded the length of chains of equivariant subspaces. This will be enough to prove Theorem~\ref{thm:equivalence-orbit-finite-equality} for equivariant automata in Section~\ref{sec:equiv-alg}. However, for generalising that theorem to finitely supported automata a finite bound on the length of chains of finitely supported spaces is needed. Here it is important that the chains are uniformly supported, i.e., all subspaces in a chain are required to have the same support. (Uniformly supported chains have been considered before in nominal domain theory~\cite{TurnerWinskel09}.)

It is natural to conjecture that the finite length property (Definition~\ref{def:length}) implies a finite bound on the length of such chains, at least for the atom structures that guarantee the existence of least supports (see~\cite[Section 9]{BKL14}). However, we are only able to prove it for the equality and ordered atoms.

\begin{theorem}\label{cor:finsup-length}
	Consider the equality or ordered atoms. Let $\bar a$ be a tuple of atoms.  For every $\bar a$-supported orbit-finite set $Q$, there is a finite upper bound on the length of chains of $\bar a$-supported subspaces of $\lin Q$.
\end{theorem}
\begin{proof}
	Define the $\bar a$-length of a vector space to be the maximal length of an $\bar a$-supported chain of subspaces. In the special case when $\bar a$ is the empty tuple, we get the notion of length from Definition~\ref{def:length}. 
	Every $\bar a$-supported orbit-finite set of atom dimension $k$ can be obtained, using images under $\bar a$-supported functions and disjoint unions, from $\bar a$-orbits contained in $\atoms^k$. Since 
	 Lemma~\ref{lem:SES} holds, with the same proof,  for $\bar a$-length, it is enough to show that the $\bar a$-length is finite for $\lin Q$ when $Q$ is a single $\bar a$-orbit contained in $\atoms^k$. We now split into two proofs, depending on whether we deal with the equality or ordered atoms. 
	 \begin{itemize}
	 \item \emph{Equality atoms.} Choose some bijection 
	 \begin{align*}
	 f : \myunderbrace{(\atoms - \bar a)}{atoms that do  \\
	 \scriptsize not appear in $\bar a$} \to \atoms,
	 \end{align*}
	 which is possible since both sets are countably infinite. (Note that $f$ cannot be finitely supported.) Let $\ell \in \set{1,\ldots,k}$ be the number of coordinates that are not from $\bar a$ in some (equivalently, every) tuple from the $\bar a$-orbit $Q$.  We can lift $f$ to an injective function (in fact, a bijection) 
	 \begin{align*}
	 g : Q \to \atoms^{(\ell
	 )},
	 \end{align*}
	 which erases the coordinates that use atoms from $\bar a$ and applies $f$ to the remaining coordinates. One can easily see that 
	 \begin{align*}
	 \lin g : \lin Q \to \lin \atoms^{(\ell)}
	 \end{align*}
	  maps   $\bar a$-supported subspaces of $\lin Q$ to equivariant subspaces  of $\lin \atoms^{(\ell)}$, and preserves strict inclusions. Therefore the $\bar a$-length of $\lin Q$ is at most the (equivariant) length of $\lin \atoms^{(\ell)}$, and the latter length is finite by Corollary~\ref{cor:equality-length}.
	 \item \emph{Ordered atoms.} Let the atoms in $\bar a$ be 
	 \begin{align*}
	 a_1 < \cdots < a_n.
	 \end{align*}
	 For every $i \in \set{0,1,\ldots,n}$, consider the interval
	 \begin{align*}
	 \atoms_i = \set{ a : a_i < a < a_{i+1}},
	 \end{align*}
	 where $a_0$ is $-\infty$ and $a_{n+1}$ is $\infty$. Choose some order-preserving bijections 
	 \begin{align*}
	 f_i : \atoms_i \to \atoms.
	 \end{align*}
	 As in the previous item, let $\ell$ be the coordinates from $Q$ which avoid atoms from $\bar a$. 
	 By erasing the coordinates which use atoms from $\bar a$, and applying the appropriate functions~$f_i$ to the remaining coordinates, we get an injective function 
	 \begin{align*}
	 g : Q \to \atoms^{\ell}.
	 \end{align*}
	 Since the functions $f_i$ all have the same co-domain, namely $\atoms$, the image of the function will contain tuples that are not necessarily strictly increasing. Nevertheless, the linear lifting 
	 \begin{align*}
		g : \lin Q \to \lin \atoms^{\ell}
		\end{align*}
		maps $\bar a$-supported subspaces in $\lin Q$ to equivariant subspaces in $\lin  \atoms^\ell$, and hence by Theorem~\ref{thm:ordered-length} we obtain a finite bound on $\bar a$-supported chains in $\lin Q$. 
	 \end{itemize}
\end{proof}

\subsection{A lower bound on length}\label{sec:lowerbound}

Theorem~\ref{thm:ordered-length} and Corollary~\ref{cor:equality-length} give upper bounds on the length of chains of equivariant vector spaces. We will now show a lower bound which is not quite matching, but exponential in the atom dimension.

We will focus on the equality atoms, where the case of interest is $\lin(\atoms^{(k)})$ for a number $k$.

For $\alpha\in\atoms^{(k)}$ and $I\subseteq\{1,\ldots,k\}$, let $\alpha|_I$ denote the tuple $\alpha$ restricted to the coordinates from $I$, and define $B_I(\alpha) \subseteq \atoms^{(k)}$ to be the set of those tuples $\beta$ for which $\beta|_I=\alpha|_I$.
 
Given $I$, define the space $V_I\subseteq \lin(\atoms^{(k)})$ as the set of all vectors $v$ such that, for every $\alpha\in\atoms^{(k)}$:
\[
	\sum_{\beta\in B_I(\alpha)}v(\beta) \quad=\quad 0.
\]
For example, for $I=\emptyset$ the space $V_I$ is the space of those vectors where all coefficients add up to~$0$. For $I=\{1,\ldots,n\}$ we have $B_I(\alpha)=\{\alpha\}$ and the space $V_I$ is trivial.

If $I\subseteq J\subseteq \{1,\ldots,n\}$ then each set $B_I(\alpha)$ is a disjoint union of sets of the form $B_J(\beta)$ for some tuples $\beta$. As a result, $V_J\subseteq V_I$. Our main observation is a kind of converse to this:

\begin{lemma}\label{lem:VJVI}
For every $I,J_1,\ldots,J_m\subseteq\{1,\ldots,k\}$, if $I\not\subseteq J_i$ for every $i=1,\ldots,m$ then:
\[
	V_{J_1}\cap\cdots\cap V_{J_m}\not\subseteq V_I.
\]
\end{lemma}
\begin{proof}
Assume sets $I$ and $J_1,\ldots,J_m$ as above. Fix some pairwise distinct atoms $c_i$ for $i\not\in I$ and $a_i,b_i$ for $i\in I$. For every subset $H\subseteq I$, define the tuple $\alpha_H=(d_1,\ldots,d_k)\in\atoms^{(k)}$ by:
\[
	d_i = \left\{\begin{array}{cl}
		a_i & \text{if }i\in H \\
		b_i & \text{if }i\in I\setminus H \\
		c_i & \text{if }i\not\in I.
		\end{array}\right.
\]
Define the vector:
\[
	v = \sum_{H\subseteq I}(-1)^{|H|}\cdot \alpha_H.
\]
Then $v\in V_{J_i}$ for each $i=1,\ldots,m$, but $v\not\in V_I$.
\end{proof}

\begin{corollary}
The space $\lin(\atoms^{(k)})$ admits a chain of equivariant subspaces of length $2^k$.
\end{corollary}
\begin{proof}
Consider any enumeration of all subsets of $\{1,\ldots,k\}$:
\[
	J_1,J_2,\ldots,J_{2^k}
\]
such that $J_i\not\subseteq J_j$ for all $i>j$. This is easy to achieve: order the subsets by their size (increasingly), and subsets of equal size can be enumerated in any order. Then define spaces:
\[
	W_i = \bigcap_{j\leq i} V_{J_j}.
\]
Obviously $W_1\supseteq W_2\supseteq\cdots\supseteq W_{2^k}$, and by Lemma~\ref{lem:VJVI} all the inclusions are strict.
\end{proof}

\subsection{Failure of finite length: vector atoms}\label{sec:vectoratoms}

In this section we consider, as the structure of atoms, the countable infinite-dimensional vector space over $\Z_2$. We will show that these atoms do not have the finite length property. This is a rather confusing case in the context of this paper: in general we study vectors built of atoms, and now we wish atoms themselves to be vectors. We will therefore present the structure and its basic properties step by step.

Let $\D = \{a,b,c,d,\ldots\}$ be a countably infinite set of generators. We consider, as the set of atoms, 
\[
	\atoms = \lin(\D),
\]
the free vector space, over the field $\Z_2$, generated by $\D$. Vectors in this space (i.e., our atoms) are formal combinations of generators; because we work over $\Z_2$, these formal combinations are simply finite sets of generators, with addition defined by symmetric difference. We will represent atoms simply by listing the generators in them, with the zero vector denoted by $0$:
\[
	0, a, b, c, \ldots, ab, ac, \ldots, bce, \ldots
\]
and their addition will be denoted by $\oplus$, so that e.g.:
\[
	abc \oplus bcd = ad.
\]
To distinguish atoms from generators we will, in this section only, denote them by $u$, $v$, $w$ etc. 

The space $\atoms$ can be seen as a relational structure in a standard way, with one constant $0\in\atoms$ and one ternary relation
\[
	R(u,v,w)\ \iff\ u\oplus v=w.
\]
The automorphisms of this structure are exactly linear automorphisms of $\atoms$ seen as a vector space. It is a standard result that the structure $\atoms$ is oligomorphic.

For $v\in \atoms$ and $X\subseteq \atoms$, we write as usual: $v\oplus X = \{v\oplus w \mid w\in X\}$.

We will consider finite-dimensional subspaces of $\atoms$. For a number $n>0$, an $n$-dimensional subspace of $\atoms$ has exactly $2^n$ elements. One way to construct such a subspace is to take the subspace spanned by some chosen $n$ generators from $\D$. For example,
\[
\{0,a,b,c,ab,ac,bc,abc\}
\]
is a 3-dimensional subspace spanned by $a$, $b$ and $c$. Other subspaces exist; for example,
\[
\{0,ab,ac,bc\}
\]
is a 2-dimensional subspace.

For subsets $X,Y\subseteq\atoms$, we denote their symmetric difference (as sets) by $X\xor Y$. 

\begin{lemma}\label{fact:vec-basic}
For all $v\in\atoms$ and $X,Y,Z\subseteq\atoms$:
\begin{itemize}
\item[(i)] $X\cap(Y\xor Z) = (X\cap Y)\xor (X\cap Z)$,\label{eq:cap-diff}
\item[(ii)] $v\oplus(X\cap Y) = (v\oplus X)\cap (v\oplus Y)$,\label{eq:oplus-cap}
\item[(iii)] $v\oplus(X\xor Y) = (v\oplus X)\xor(v\oplus Y)$.\label{eq:oplus-diff}
\end{itemize}
\end{lemma}
\begin{proof}Routine exercise.
\end{proof}

We will be interested in symmetric differences of finite families of vector spaces of the same finite dimension.

\begin{lemma}\label{lem:nmxor}
For numbers $n>m>0$, let $V\subseteq\atoms$ be an $n$-dimensional vector space and let 
$V_1,V_2,\ldots,V_k$ be all its $m$-dimensional subspaces. Then
\begin{align}\label{eq:nmxor}
	V_1\xor V_2\xor\cdots\xor V_k = V.
\end{align}
\end{lemma}
\begin{proof}
It is a standard exercise to show that
\[
	k = \dfrac{(2^n-1)(2^n-2)\cdots(2^n-2^{m-1})}{(2^m-1)(2^m-2)\cdots(2^m-2^{m-1})}.
\]
Both the numerator and the denominator of this fraction are divisible by the same power of $2$; specifically it is $\frac{m(m-1)}{2}$. As a result, $k$ is an odd number. 

For any vector $v\in V$, let $\alpha(v)$ be the number of subspaces $V_i$ that contain $v$. Note that, due to the symmetry under automorphisms of $V$, for all non-zero $v,w\in V$ there is $\alpha(v)=\alpha(w)$; denote this number by $\alpha$.

Now calculate:
\[
	k\cdot2^m = \sum_{i=1}^k|V_i| = \sum_{v\in V}\alpha(v) = k+\sum_{v\in V\setminus\{0\}}\alpha(v) = k+(2^n-1)\cdot\alpha.
\] 
The number on the left is even and $k$ is odd, so $\alpha$ must be odd.

As a result, {\em every} vector in $V$ belongs to an odd number of subspaces $V_i$, which yields~\eqref{eq:nmxor}.
\end{proof}

The following basic property of finite-dimensional spaces over $\Z_2$ will be important for our purposes.

\begin{lemma}\label{lem:vsgap}
For every finite family $A_1,A_2,\ldots, A_k$ of $n$-dimensional vector subspaces of $\atoms$, the symmetric difference $A_1\xor A_2\xor\cdots\xor A_k$ is either empty or has size at least $2^n$.
\end{lemma}
\begin{proof}
See Appendix~\ref{app:vsgap}. 
\end{proof}

To see that Lemma~\ref{lem:vsgap} is not entirely obvious, let us have a look at some examples. First of all, the symmetric difference from the lemma may be empty even if all $A_i$ are distinct. For an example, consider $n=2$, $k=4$ and vector subspaces:
\begin{align*}
 V_1 &= \{0,a,b,ab\} & V_2 &= \{0,b,c,bc\} \\
 V_3 &= \{0,a,c,ac\} & V_4 &= \{0,ab,bc,ac\}. 
\end{align*}

The symmetric difference may also be of size strictly between $2^n$ and $2^{n+1}$, so in particular it may not be a vector subspace. For an example, consider $n=3$, $k=2$ and vector subspaces:
\begin{align*}
 V_1 &= \{0,a,b,c,ab,bc,ac,abc\} \\
 V_2 &= \{0,a,d,e,ad,de,ae,ade\};
\end{align*}
here $|V_1\xor V_2|=12$. Another example, for $n=2$ and $k=3$, is:
\[
 V_1 = \{0,a,b,ab\} \qquad V_2 = \{0,b,c,bc\} \qquad V_3 = \{0,c,d,cd\};
\]
here $|V_1\xor V_2\xor V_3|=6$. If we additionally consider
\[
 V_4 = \{0,ab,bc,ac\},
\]
the symmetric difference becomes $\{a,d,ac,cd\}$, showing that the symmetric difference may not be a vector subspace even if its size is $2^n$. 

\medskip

So far in this section we studied properties of the vector atoms, but have not considered vector spaces {\em over} those atoms yet. In fact we will need to consider only one such vector space and its equivariant subspaces: $\lin\atoms$, the free vector space generated by the set of atoms, over the field $\Z_2$. Vectors in $\lin\atoms$ are formal linear combinations of atoms, such as:
\begin{equation}\label{eq:vecvec-ex}
	0 + a + b + ab \qquad \text{or} \qquad a + bc + ac + bde.
\end{equation}
The addition symbol above is used to build formal combinations of atoms, and it should not be confused with $\oplus$, which is an operation to build atoms from other atoms. Thanks to the choice of the field, combinations as above can be seen simply as finite sets of atoms, with symmetric difference $\xor$ as addition.

The main result of this section is that $\lin\atoms$ has infinite length, so the vector atoms $\atoms$ does not have the finite length property.
\begin{theorem}
$\lin\atoms$ has an infinite chain of equivariant subspaces.
\end{theorem}
\begin{proof}
For any number $n$, let $W_n\subseteq\lin\atoms$ be the subspace spanned by those vectors in $\lin\atoms$ that, considered as subsets of $\atoms$, are $n$-dimensional vector subspaces of $\atoms$. (For example, in~\eqref{eq:vecvec-ex} the first vector is a subspace of $\atoms$, the second one is not.) It is easy to see that each space $W_n$ is equivariant.

For any $n>m$, the inclusion $W_n\subseteq W_m$ follows from Lemma~\ref{lem:nmxor}. Moreover, this inclusion is strict by Lemma~\ref{lem:vsgap}. Indeed, every nonzero vector in $W_n$ has at least $2^n$ elements, and $W_m$ contains vectors of smaller size.

As a result, there is an infinite chain of subspaces:
\[
	\cdots \subsetneq W_3\subsetneq W_2 \subsetneq W_1 \subsetneq \lin\atoms.
\]
\end{proof}

Note that the above infinite chain of subspaces is decreasing. We conjecture that for all oligomorphic atom structures $\atoms$, all spaces $\lin\atoms^k$ are {\em noetherian}, i.e. they do not admit {\em increasing} infinite chains of equivariant subspaces. 

\section{The equivalence algorithm}\label{sec:equiv-alg}
\label{sec:equivalence}

In this section, we use results from Section~\ref{sec:length} to complete the proof of  Theorem~\ref{thm:equivalence-orbit-finite-equality}. We assume that the atoms $\atoms$ are the equality or ordered atoms. 

\begin{lemma}\label{lem:short-distinguishing-inputs}
    Let $\Aa_1$ and $\Aa_2$ be equivariant weighted orbit-finite automata, and let $n$ be their orbit count and $k$ the atom dimension as in Theorem~\ref{thm:equivalence-orbit-finite-equality}. If the recognized weighted languages are different, then this difference is witnessed by some input word of length at most 
    \begin{align*}
         2^{\poly k} \cdot n.
        \end{align*}
\end{lemma}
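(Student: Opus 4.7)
The plan is to follow the strategy already outlined in the paragraph just before Section~\ref{sec:length}: reduce to zeroness of a single automaton, consider the ascending chain of equivariant configuration subspaces, and use the finite length property to bound when the chain stabilizes. Concretely, I first replace the pair $(\Aa_1,\Aa_2)$ with the difference automaton $\Aa = \Aa_1 - \Aa_2$ obtained by taking the disjoint union of the two state spaces and negating the final weights of $\Aa_2$. This automaton is still equivariant, has the same orbit count $n$ and atom dimension $k$, and its output is the pointwise difference of the two weighted languages. So it suffices to prove: if $\Aa$ outputs a nonzero value on some word, then it does so on a word of length at most $2^{\poly k}\cdot n$.

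Let $Q$ be the state space of $\Aa$ and, for each word $w\in\Sigma^*$, let $c(w)\in\lin Q$ be its configuration vector (the initial-weight vector propagated by the transition operators $T_a:\lin Q\to\lin Q$ defined by $T_a(q)=\sum_p \delta(q,a,p)\,p$; the non-guessing condition (*) makes each $T_a$ well defined and each $c(w)$ a finite linear combination). The output on $w$ is the scalar $\langle F,c(w)\rangle$, linear in $c(w)$. Set
\[
V_i = \spanvec\{c(w) : |w|\leq i\}\ \subseteq\ \lin Q,
\]
giving an ascending chain of subspaces. Because $\Aa$ is equivariant, each $V_i$ is an equivariant subspace of $\lin Q$. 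The key structural observation is that $V_{i+1}=V_i + \sum_{a\in\Sigma} T_a V_i$, so if $V_{i+1}=V_i$ then $T_a V_i\subseteq V_i$ for every $a$, and a straightforward induction gives $V_j=V_i$ for all $j\geq i$. In particular, the chain stabilizes at the first index $i^*$ where it stops strictly growing, and every reachable configuration lies in $V_{i^*}$.

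The length of the strictly increasing prefix of the chain is at most $\length(\lin Q)$. Applying Lemma~\ref{lem:ordered-length} (ordered atoms) or Lemma~\ref{lem:equality-length} (equality atoms) to $Q$, whose orbit count is $n$ and atom dimension is $k$, gives
\[
\length(\lin Q)\ \leq\ n\cdot k!\cdot (1+k)!\ =\ 2^{\poly k}\cdot n,
\]
so $i^*\leq 2^{\poly k}\cdot n$. Now suppose the language of $\Aa$ is nonzero on some word $w$. Since $c(w)\in V_{i^*}$, we can write $c(w)=\sum_j \lambda_j c(w_j)$ with each $|w_j|\leq i^*$. Linearity of $\langle F,-\rangle$ yields $\sum_j \lambda_j \langle F,c(w_j)\rangle=\langle F,c(w)\rangle\neq 0$, so at least one $w_j$ is a witness of length at most $i^*\leq 2^{\poly k}\cdot n$, which is exactly the bound claimed.

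The only subtle point in this plan is the stabilization step: one must check that $V_{i+1}=V_i$ really implies $T_a V_i\subseteq V_i$ for \emph{every} letter $a$, even when $\Sigma$ is orbit-infinite. This follows because $V_{i+1}$ already contains all $T_a v$ for $v\in V_i$ and all $a\in\Sigma$ by definition, so the equality forces the inclusion uniformly; no finiteness of $\Sigma$ is needed. Everything else is a direct application of the length bounds from Section~\ref{sec:length}, so no further technical obstacle is expected.
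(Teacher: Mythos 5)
Your proposal is correct and follows essentially the same approach as the paper's proof: reduce to zeroness of the difference automaton, consider the chain of equivariant subspaces $V_i$ spanned by configurations of short words, invoke the length bounds of Lemmas~\ref{lem:equality-length} and~\ref{lem:ordered-length}, and conclude that a nonzero output must already appear on a short word. You are slightly more explicit than the paper in spelling out why stabilization at one step implies stabilization forever (via $V_{i+1}=V_i+\sum_{a}T_aV_i$), which the paper leaves implicit, but the argument is the same.
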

\begin{proof}
    Let $Q$ be the state space of the difference automaton $\Aa_1 - \Aa_2$ as described in the reduction from equivalence to zeroness. 
     For an input word $w \in \Sigma^*$, let $[w] \in \lin Q$ be its corresponding configuration in the difference automaton, and let $V_i \subseteq \lin Q$ be the subspace that is spanned by configurations of  input words of length at most $i$.   
    By Theorem~\ref{thm:ordered-length}/Corollary~\ref{cor:equality-length}, we know that the chain $\set{V_i}_i$  must stabilize in a number of steps that is bounded as in the statement of the lemma.  This means that for every input word, its  configuration  is a  linear combination of configurations of short words; in particular the difference automaton can produce a nonzero output if and only if it can produce a nonzero output on a short input word.
\end{proof}

At this point, we could solve the equivalence problem by  guessing a short differentiating word, leading to a nondeterministic algorithm for non-equivalence, with running time as in the bound from Lemma~\ref{lem:short-distinguishing-inputs}. 
We can, however, improve this to get a deterministic algorithm, by using a reduction to the equivalence problem for finite weighted automata. 
Short input words necessarily use few atoms, and therefore the equivalence problem boils down to checking equivalence for input words that have few atoms. The latter problem is solved in the following lemma.

\begin{lemma}\label{lem:reduction-to-finite}
    Consider the following problem. 
\begin{itemize}
\item {\bf Input.} Two equivariant weighted orbit-finite automata $\Aa_1$ and $\Aa_2$ and a number $\ell \in \set{1,2,\ldots}$;
\item {\bf Question.} Are the two  automata equivalent on all input words supported by at most $\ell$ atoms?
\end{itemize}
This problem can be solved  in time polynomial in $n  \cdot \ell^k$, where the parameters $n$ and $k$ are defined as in Theorem~\ref{thm:equivalence-orbit-finite-equality}.
    \end{lemma}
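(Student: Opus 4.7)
The plan is to reduce the question to an equivalence problem for two finite weighted automata and then apply \schutz's classical polynomial-time algorithm. First, fix an $\ell$-tuple $\bar a=(a_1,\ldots,a_\ell)$ of atoms (chosen increasing in the ordered case). Since both $\Aa_1$ and $\Aa_2$ are equivariant, any input word $w$ supported by at most $\ell$ atoms can be mapped by a suitable atom automorphism to an $\bar a$-supported word producing the same weight in either automaton. Hence it suffices to test $\Aa_1(w)=\Aa_2(w)$ only for $\bar a$-supported input words $w$.

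Next, for $i=1,2$ I would build a finite weighted automaton $\Aa_i^{\bar a}$ by restricting the state set of $\Aa_i$ to $Q_{\bar a}$, the $\bar a$-supported states, restricting the alphabet to $\Sigma_{\bar a}$, the $\bar a$-supported letters, and keeping the corresponding restrictions of $I$, $\delta$, and $F$. Since each of the $n$ equivariant orbits of states has atom dimension at most $k$, the number of $\bar a$-supported elements in each such orbit is bounded by $O(\ell^k)$; a similar bound applies to $\Sigma_{\bar a}$. Thus $\Aa_i^{\bar a}$ has size polynomial in $n\cdot \ell^k$, and \schutz's equivalence algorithm applied to the pair $(\Aa_1^{\bar a},\Aa_2^{\bar a})$ then runs in the claimed time.

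The heart of the plan is to verify that for every $\bar a$-supported word $w$, the weights $\Aa_i(w)$ and $\Aa_i^{\bar a}(w)$ coincide --- equivalently, that every run of $\Aa_i$ with nonzero pre-weight on $w$ visits only $\bar a$-supported states. I would prove this by induction on position in the run. For the initial step, the set $\set{q:I(q)\neq 0}$ is finite by the non-guessing condition~(*) and closed under all atom automorphisms by equivariance of $I$; since the atoms are oligomorphic, any element whose orbit under atom automorphisms is finite is equivariant, hence $\bar a$-supported. For the inductive step, if the current state $q\in Q_{\bar a}$ and the letter $a\in\Sigma_{\bar a}$, then $\set{p:\delta(q,a,p)\neq 0}$ is finite by~(*) and closed under $\bar a$-automorphisms by equivariance of $\delta$, so the same oligomorphicity argument --- now applied relative to $\bar a$ --- places every such $p$ in $Q_{\bar a}$.

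The main obstacle is precisely this inductive step: the fact that an element lying in a finite $\bar a$-orbit of atom automorphisms must already be $\bar a$-supported. This is a standard but essential feature of the equality and ordered atoms (easily proved by producing infinitely many distinct images of a non-$\bar a$-supported element under $\bar a$-automorphisms), and it is what closes the finite sub-automaton $\Aa_i^{\bar a}$ under the transitions of $\Aa_i$ restricted to $\bar a$-supported inputs.
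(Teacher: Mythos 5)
Your proposal is correct and follows essentially the same route as the paper: fix $\bar a$, reduce to $\bar a$-supported input words by equivariance, restrict both automata to $\bar a$-supported states and letters (soundness via the non-guessing condition~(*) plus equivariance), bound the size by $n\cdot\ell^k$, and apply \schutz's algorithm. The only difference is that you spell out the closure argument --- that a finite set invariant under $\bar a$-automorphisms consists entirely of $\bar a$-supported elements --- whereas the paper asserts it in one sentence; your extra detail is sound.
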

\begin{proof}
    Choose a tuple $\bar a$ of $\ell$ atoms. Since the automata are equivariant, they are equivalent on inputs with at most $\ell$ atoms if and only if they are equivalent on inputs words supported by~$\bar a$. 

    By equivariance and condition (*) from Definition~\ref{def:weighted-orbit-finite}, if a state $q$  and an input letter $\sigma$ are both supported by $\bar a$, then the same is true for every state $p$ such that $(q,\sigma,p)$ is a transition with nonzero weight.  Therefore, when restricted to input words that are supported by $\bar a$, both automata only use states that are supported by $\bar a$.  These observations motivate the  following definition: for $i \in \set{1,2}$,   let  $\Aa_{i, \bar a}$ the  weighted automaton that is obtained from $\Aa_i$ by restricting  the states and alphabet to elements supported by $\bar a$, and restricting the transitions and weight functions to the new alphabet and states. 
    This automaton is finite, 
with size  bounded by $n \cdot \ell^k$. The lemma follows by applying the polynomial time algorithm for equivalence of finite weighted automata. 
\end{proof}

The above two lemmas complete the proof of Theorem~\ref{thm:equivalence-orbit-finite-equality}. The algorithm for zeroness is simply the algorithm from Lemma~\ref{lem:reduction-to-finite}, with $\ell$ being the bound  from Lemma~\ref{lem:short-distinguishing-inputs}.

Notice how our algorithm does not rely on an explicit computation of the chain of subspaces $\{V_i\}_i$ from the proof of Lemma~\ref{lem:short-distinguishing-inputs}. Doing so would require a method for calculating (a representation of) $V_{i+1}$ from (a representation of) $V_i$, and a method of checking whether two subspaces of $\lin Q$ are equal, to detect stabilisation. Variants of the latter problem, under the guise of solving orbit-finite systems of linear equations, are studied in~\cite{GHL22}.

\section{Vector spaces with atoms}
\label{sec:vector-spaces}
In this section, we study in more depth vector spaces that are spanned by orbit-finite sets.  Apart from their independent interest, these results will be used in Section~\ref{sec:minimisation} to minimize weighted automata, and in Section~\ref{sec:unamb} to decide equivalence for unambiguous automata.

Vector spaces of the form $\lin Q$, where $Q$ is an orbit-finite set, could be called {\em orbit-finite-dimensional}, since they have an orbit-finite basis. These spaces are general enough to treat weighted orbit-finite automata. Some other spaces admit orbit-finite bases too; in~\cite[Thm.~3.3]{GHL22}, this is shown for a variant of $\lin Q$ where vectors are infinite (but finitely supported) linear combinations, rather than finite ones. Nevertheless, vector spaces with orbit-finite bases are not very robust.  The reason is that extracting a basis uses choice, see~\cite[Thm.~1]{blassbases}, and the principle of choice fails for sets with atoms.

\begin{example}
    \label{ex:no-equivariant-basis}
    Recall the vector space  $\lin \atoms$ that was discussed in Example~\ref{ex:moerman-space}, and the subspace $V$ that was spanned by
    \begin{align*}
        X = \set{a-b : a \neq b \in \atoms}.
    \end{align*}
    The set $X$ is not a basis, since the vectors $a-b$ and $b-a$ (or $a-b$, $b-c$ and $a-c$) are linearly dependent. However, $X$ is a single equivariant orbit, so it does not have any nonempty proper subset that is equivariant. Therefore, no equivariant subset of $X$ is a basis. In fact, $V$ does not have any equivariant basis, even if we allow bases that are not contained in $X$. It does have, however, a finitely supported basis contained in $X$: for any fixed atom $a_0 \in \atoms$, the set
    \begin{align*}
     \set{a_0 - b : b \in \atoms - \set {a_0} }
    \end{align*} 
    is a basis.
\end{example}

In the previous example, there was no equivariant basis, but there was a finitely supported one. In the next example there is no finitely supported basis at all.
\begin{example}
    \label{ex:no-orbit-finite-basis}
    Consider the equality atoms and the space  $\lin(\atoms^{(2)})$ over the field of rationals. A vector  in this space can be visualized as a weighted directed finite graph, where 
vertices are atoms and the weight of an edge $(a,b)$ is the corresponding coefficient in the vector. 
Consider the subspace of $\lin(\atoms^{(2)})$  spanned by 
    \begin{align*}
    X = \set{(a,b) - (b,a) : a \neq b \in \atoms}.
    \end{align*}
This subspace consists of graphs where for every pair of vertices, the connecting edges in both directions have opposite weights. We claim that there is no finitely supported subset of $X$ that is a basis of this space. Indeed, suppose that $Y \subseteq X$ is a finitely supported basis. It is easy to see that for every two distinct atoms $a, b$, the set $Y$ must contain one of the  vectors
\[
	(a,b)-(b,a) \qquad \text{or} \qquad (b,a)-(a,b).
\]
If the atoms $a,b$ are fresh (i.e. they do not belong to the least support of $Y$), by swapping these two atoms we map one of the vectors to the other, hence {\em both} vectors must belong to $Y$. However, the two vectors are linearly dependent and so $Y$ is not a basis. 

Using a similar argument one can show that the subspace spanned by $X$ does not have any finitely supported basis, even if we allow bases that are not contained in $X$.
\end{example}

The above example shows that vector spaces with an orbit-finite basis are not closed under finitely supported subspaces.  The same issue appears with finitely supported quotients (images of surjective  linear maps). Indeed, the spaces from Examples~\ref{ex:no-equivariant-basis} and~\ref{ex:no-orbit-finite-basis} are images of equivariant linear maps from $\lin(\atoms^{(2)})$ to, respectively, $\lin(\atoms)$ and $\lin(\atoms^{(2)})$. These issues will become a problem in the next section, where we minimize weighted orbit-finite automata, since minimization will require taking subspaces and quotients.   
To deal with these issues, we notice that all spaces that we have considered so far, even if they lack an orbit-finite basis, have orbit-finite spanning sets. They are therefore instances of the following general notion: 

\begin{definition}[Orbit-finitely spanned vector space] Fix a field $\field$ and an atom structure.
    A \emph{vector space with atoms} is a vector space  such that the underlying set $V$ is a set with atoms, and the operations
    \begin{align*}
    \myunderbrace{+ : V \times V \to V}{binary addition} \qquad 
    \myunderbrace{\cdot : \field \times V \to V}{scalar multiplication}
    \end{align*}
    are finitely supported.  A vector space with atoms is called \emph{orbit-finitely spanned} if there is an orbit-finite subset of the vector space which spans it.
\end{definition}

As we have seen in Example~\ref{ex:no-orbit-finite-basis}, some orbit-finitely spanned vector spaces do not have any orbit-finite basis.
They do, however, have finite length:

\begin{lemma}\label{lem:finite-length-for-ofs}
For atoms with the finite length property,  equivariant orbit-finitely spanned vector spaces have finite length.
\end{lemma}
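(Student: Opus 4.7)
The plan is to reduce to the case of $\lin X$ for an equivariant orbit-finite $X$, which is already handled by Lemmas~\ref{lem:equality-length} and~\ref{lem:ordered-length} and the corollary that follows Lemma~\ref{lem:SES}. First, I would upgrade a given orbit-finite spanning set of $V$ to an equivariant one. Suppose $X \subseteq V$ is orbit-finite and supported by some tuple $\bar a$, so $X$ is a finite union of $\bar a$-orbits $O_1, \ldots, O_n$. Let $X' = \bigcup_\pi \pi(X)$, where $\pi$ ranges over all atom automorphisms. Since $V$ is equivariant, $X' \subseteq V$. Each $\bar a$-orbit $O_i$ is contained in a single equivariant orbit $O_i'$ (because any $\bar a$-automorphism is in particular an atom automorphism), so $X' = O_1' \cup \cdots \cup O_n'$ is equivariant and orbit-finite. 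Moreover $X' \supseteq X$ still spans $V$.

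Second, I would establish the following sublemma, which is the analogue of Lemma~\ref{lem:SES}(iii) for linear maps between equivariant vector spaces: if $f \colon W \to V$ is an equivariant surjective linear map, then $\length(V) \leq \length(W)$. The proof is the usual pullback argument. Given any chain
\[
    V_1 \subsetneq V_2 \subsetneq \cdots \subsetneq V_m \subsetneq V
\]
of equivariant subspaces, the preimages $f^{-1}(V_i)$ are equivariant subspaces of $W$ (equivariance because $f$ is equivariant, subspace because $f$ is linear), and the inclusions remain proper because $f$ is surjective: if $f^{-1}(V_i) = f^{-1}(V_{i+1})$, then applying $f$ to both sides gives $V_i = V_{i+1}$.

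Combining these two steps, take $X'$ from the first step and consider the evaluation map $e \colon \lin X' \to V$ that extends the inclusion $X' \hookrightarrow V$ linearly. This map is equivariant, linear, and surjective (since $X'$ spans $V$). By the sublemma, $\length(V) \leq \length(\lin X')$, and by the corollary of Lemma~\ref{lem:SES} the right-hand side is finite, since $X'$ is equivariant and orbit-finite and the atoms are assumed to have the finite length property.

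The only delicate point is the upgrade from $X$ to the equivariant spanning set $X'$, and more specifically the observation that each $\bar a$-orbit embeds into a single equivariant orbit, so that orbit-finiteness is preserved when closing under all atom automorphisms. Once this is checked, everything else is formal: the pullback sublemma is standard, and the finite length of $\lin X'$ is exactly what the previous section establishes.
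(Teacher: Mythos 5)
Your proof is correct and follows essentially the same route as the paper: extend the inclusion of an orbit-finite spanning set to a surjective finitely supported linear map onto $V$ and bound $\length(V)$ by $\length(\lin X')$ via monotonicity of length under such surjections, the latter being finite by the corollary to Lemma~\ref{lem:SES}. You are somewhat more careful than the paper in two places that it glosses over---upgrading the orbit-finite spanning set to an equivariant one (the paper simply assumes $Q$ equivariant), and formulating the length-monotonicity sublemma directly for equivariant linear surjections rather than citing Lemma~\ref{lem:SES}(iii), which is stated only for maps induced by functions between sets---but both are minor completions of the same argument rather than a different approach.
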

\begin{proof}
Let $V$ be spanned by an equivariant orbit-finite set $Q$, and let
\[ f \colon \lin Q \to V \]
be the unique linear map which extends the inclusion of $Q$ in~$V$.
This is a surjective finitely supported linear map, so (by Lemma~\ref{lem:SES}(iii)) the length of $V$ is not greater than the length of $\lin Q$, which is finite by the finite length property.
\end{proof}

\begin{theorem}[Closure properties]\label{thm:orbit-finitely-spanned-closure}
If the atoms have the finite length property, then
equivariant orbit-finitely spanned vector spaces  are closed under equivariant quotients, equivariant subspaces, direct sums and tensor products.
\end{theorem}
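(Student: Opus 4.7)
The plan is to treat each of the four closures separately. Three of them---quotients, direct sums, and tensor products---I expect to follow immediately by pushing orbit-finite spanning sets through the relevant construction, with no use of the finite length property. The fourth, equivariant subspaces, is the main obstacle, and I plan to invoke Lemma~\ref{lem:finite-length-for-ofs} to handle it.

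For the three easy cases: if $V$ is spanned by an equivariant orbit-finite $Q$ and $f \colon V \twoheadrightarrow W$ is a finitely supported surjective linear map onto an equivariant space $W$, then $f(Q)$ is a finitely supported image of an orbit-finite set, hence orbit-finite, and visibly spans $W$. For $V_1 \oplus V_2$, the disjoint union $Q_1 \sqcup Q_2$ of equivariant orbit-finite spanning sets (embedded in the usual way) is an orbit-finite spanning set for the sum. For $V_1 \otimes V_2$, the set of elementary tensors $\{q_1 \otimes q_2 : q_i \in Q_i\}$ is a finitely supported image of the orbit-finite product $Q_1 \times Q_2$, hence orbit-finite, and spans the tensor product by definition.

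For the subspace case, given an equivariant subspace $W \subseteq V$, I would build an orbit-finite spanning set of $W$ by the following greedy procedure. Start with $W_0 = \{0\}$; as long as $W_i \subsetneq W$, pick any $w \in W \setminus W_i$ and set $W_{i+1} = W_i + \spanvec(O_w)$, where $O_w = \{\pi(w) : \pi \text{ an atom automorphism}\}$ is the orbit of $w$ under the automorphism group. Because $w$ is finitely supported, $O_w$ is a single equivariant orbit, hence orbit-finite and equivariant. Inductively each $W_i$ is equivariant and spanned by a finite union of such orbits, so it is orbit-finitely spanned.

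The hard part is termination. The sequence $W_0 \subsetneq W_1 \subsetneq \cdots$ is a strictly ascending chain of equivariant subspaces of the ambient space $V$. By Lemma~\ref{lem:finite-length-for-ofs}, $V$ has finite length, so the chain stabilizes at some $W_n = W$, which is therefore spanned by the finite union $O_{w_1} \cup \cdots \cup O_{w_n}$. This is the only place the finite length property enters; conceptually it plays the role of the classical fact that finite-length modules are finitely generated, and it is precisely what lets us sidestep the obstruction of Example~\ref{ex:no-orbit-finite-basis}, where an orbit-finite spanning set exists but no orbit-finite basis does.
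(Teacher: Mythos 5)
Your proof is correct. For quotients, direct sums, and tensor products you follow the paper exactly. For the subspace case your argument is genuinely different from the paper's. The paper defines an ascending chain $U_1 \subseteq U_2 \subseteq \cdots$ indexed by support size, where $U_n$ is the span of vectors of $U$ supported by at most $n$ atoms; termination comes from the finite length property, and each $U_n$ is shown to be orbit-finitely spanned via an auxiliary claim that the vectors of $V$ supported by a fixed tuple $\bar{b}$ of $n$ atoms form a finite-dimensional space, whose finite basis, closed under atom automorphisms, yields an orbit-finite spanning set. Your greedy construction, adding one full equivariant orbit $O_w$ at a time, avoids that auxiliary finite-dimensionality claim entirely: each $W_i$ is by construction spanned by a finite union of orbits, so orbit-finite spanning is manifest, and the only work is in termination, which both proofs obtain from the finite length of $V$ (Lemma~\ref{lem:finite-length-for-ofs}). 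Your route is arguably cleaner, since the paper's finite-dimensionality step is stated without proof and implicitly depends on a finitely-supported version of the finite length property (in the spirit of Lemma~\ref{lem:finsup-length}), whereas your chain consists of genuinely equivariant subspaces throughout and needs only the equivariant length bound, which matches the stated hypothesis of the theorem exactly.
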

\begin{proof}
   Quotients and direct sums are immediate.  For the tensor product $U \otimes V$, suppose $U$ and $V$ are spanned by orbit-finite sets $Q$ and $P$.
   Since  $\lin Q \otimes \lin P$ is isomorphic to $\lin (Q \times P)$, it follows that  $U \otimes V$ is spanned by $Q \times P$, which is orbit-finite. 
   
   We are left with the subspaces. Suppose that $V$ is equivariant and orbit-finitely spanned, and let $U$ be an equivariant subspace of $V$. Construct an increasing chain of equivariant subspaces of $U$ \begin{align*}
    U_0\subseteq U_1 \subseteq U_2 \subseteq \cdots \subseteq U
    \end{align*} 
as follows:
\begin{itemize}
\item $U_0$ is the trivial space,
\item if $U_n\neq U$ then pick any vector $u_{n+1}\in U\setminus U_n$, and let $U_{n+1}$ be the subspace of $U$ spanned by all vectors in $U_n$ and all vectors in the orbit of~$u_{n+1}$.
\end{itemize}   
Each $U_n$ is an equivariant subspace of the orbit-finitely-spanned $V$, so by Lemma~\ref{lem:finite-length-for-ofs} the chain cannot grow forever, hence $U=U_n$ for some $n$. Of course, $U_n$ is spanned by all vectors in the orbits of $u_1,\ldots,u_n$.
\end{proof}

Thanks to Theorem~\ref{cor:finsup-length}, in the equality and ordered atoms,  Lemma~\ref{lem:finite-length-for-ofs} and Theorem~\ref{thm:orbit-finitely-spanned-closure} hold also for orbit-finitely spanned vector spaces that are  not necessarily equivariant.

In the remainder of this section, we discuss another closure property of orbit-finitely spanned vector spaces: if $V$ is an equivariant orbit-finitely spanned vector space, then the same is true for its \emph{finitely supported dual}, which is the vector space of finitely supported linear maps from $V$ to $\field$. We will prove closure under finitely supported duals for the equality and ordered atoms.

Before considering finitely supported duals, we discuss a slightly simpler vector space.
For~$Q$ an orbit-finite set, define 
\begin{align*}
\forms Q
\end{align*}
to be the set of finitely supported functions from $Q$ to the field $\field$. This set can be viewed as a vector space, with coordinate-wise addition and scalar multiplication.  We have
\begin{align*}
\lin Q\   \subseteq \   \forms Q,
\end{align*}
and the inclusion is strict when $Q$ is infinite, as witnessed by the function that returns $1$ on all arguments.

\begin{example}\label{ex:forms-atoms}
    For the equality atoms,
    the space $\forms \atoms$ is spanned by the functions 
    \begin{align}\label{eq:basis-for-forms-from-atoms}
    \set{f_\atoms} \ \cup \ \set{f_a : a \in \atoms}
    \end{align}
    where $f_\atoms$ maps all atoms to $1$, while $f_a$ maps $a$ to $1$ and the remaining atoms to $0$.
    Indeed, every function $f : \atoms \to \field$ supported by an atom tuple $\bar a$ can be presented as: 
    \begin{align*}
    f \ = \ \myunderbrace{f(c)}{$c$ is some fresh \\ \scriptsize atom not in $\bar a$} \cdot f_\atoms \ + \ \sum_{a \in \bar a} (f(a) - f(c)) \cdot f_a.
    \end{align*}
   The functions from~\eqref{eq:basis-for-forms-from-atoms}  are linearly independent, hence they form an orbit-finite basis of $\forms \atoms$.
    
A similar result holds for the ordered atoms, except that there the spanning set is
  \begin{align*}\label{eq:basis-for-forms-from-atoms-ordered}
    \set{f_\atoms} \ \cup \ \set{f_a : a \in \atoms} \ \cup \ \set{f_{> a} : a \in \atoms}
    \end{align*}
    where $f_\atoms$ and $f_a$ are as before and $f_{>a}$ maps $b$ to $1$ precisely if $b > a$.
\end{example}

The above example shows that $\forms \atoms$ is  orbit-finitely spanned. This is true for every orbit-finite set, not just $\atoms$:
 
\begin{theorem}\label{thm:forms-orbit-finite} Assume the equality or the ordered atoms. If $Q$ is an orbit-finite set, then  $\forms Q$ is   orbit-finitely spanned.
\end{theorem}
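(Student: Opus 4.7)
The plan is to reduce the theorem to showing that $\forms{\atoms^{(k)}}$ is orbit-finitely spanned for each $k$, and then to exhibit such a spanning set using indicators of ``reduced types''.

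For the reduction, I would first decompose the orbit-finite $Q$ into finitely many single-orbit pieces, giving $\forms Q \cong \bigoplus_i \forms{Q_i}$, and reduce to a single orbit by closure under direct sums (Theorem~\ref{thm:orbit-finitely-spanned-closure}). Each single-orbit $Q_i$ of atom dimension $k$ is the image of a finitely supported surjection $\atoms^{(k)} \twoheadrightarrow Q_i$, and precomposition with this surjection yields an injective finitely supported linear map $\forms{Q_i} \hookrightarrow \forms{\atoms^{(k)}}$. Closure under finitely supported subspaces---Theorem~\ref{thm:orbit-finitely-spanned-closure} combined with the finite length property established in Lemmas~\ref{lem:equality-length}, \ref{lem:ordered-length} and~\ref{lem:finsup-length}---therefore reduces the theorem to the case $Q = \atoms^{(k)}$.

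To construct a spanning set $S \subseteq \forms{\atoms^{(k)}}$, I call a \emph{reduced type} a pair $(\tau,\bar b)$ where $\bar b \in \atoms^{(\ell)}$ for some $\ell \leq k$ and $\tau$ is a single $\bar b$-orbit of $\atoms^{(k)}$ with the property that no parameter in $\bar b$ can be omitted without enlarging the orbit. For equality atoms this forces each $b_j$ to appear as some coordinate, so $\tau$ is specified by a partial injection from $\{1,\ldots,k\}$ to $\bar b$; for ordered atoms $\tau$ additionally records the linear order among coordinates and parameters, and a parameter may instead be essential as an interval endpoint. Place the indicator function $\chi_{\tau,\bar b}$ of each reduced type into $S$. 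Since there are only finitely many combinatorial shapes of reduced types, and each shape of arity $\ell$ gives a single equivariant orbit of indicators parametrised by $\atoms^{(\ell)}$, the set $S$ is orbit-finite.

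To verify that $S$ spans, take any $f \in \forms{\atoms^{(k)}}$ supported by a tuple $\bar a$. Since $\bar a$ has only finitely many orbits on $\atoms^{(k)}$, we have $f = \sum_{O} f(O)\cdot \chi_O$ over $\bar a$-orbits. Each $\bar a$-orbit $O$ is determined by a type over $\bar a$ whose essential parameters form some $\bar b \subseteq \bar a$; the orbit $O$ then differs from its associated reduced type $(\tau_O,\bar b)$ only by the side condition that the ``inessential'' coordinates additionally avoid $\bar a \setminus \bar b$. An inclusion--exclusion over the subsets of $\bar a \setminus \bar b$ expresses each $\chi_O$ as an integer linear combination of reduced indicators $\chi_{\tau',\bar b'} \in S$ with $\bar b' \subseteq \bar a$, and substituting back gives $f \in \spanvec(S)$. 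The main obstacle I anticipate is precisely this inclusion--exclusion bookkeeping: one must check that every $\bar a$-orbit indicator is truly captured by reduced types with parameters drawn from $\bar a$, and for ordered atoms additionally track which open interval determined by $\bar a$ each fresh coordinate falls into---the reduced-type formalism is designed to absorb these cases uniformly.
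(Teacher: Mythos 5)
Your proposal is correct and follows essentially the same strategy as the paper: reduce to a canonical single-orbit set ($\atoms^{(k)}$ over equality, $\atoms\choose k$ over order, via the surjection/embedding and closure under subspaces), take as the spanning set characteristic functions of orbits or sets with a support of size bounded in terms of $k$, and prove spanning one $\bar a$-orbit at a time. The only real difference is presentational: where you propose a one-shot inclusion--exclusion (which, to be precise, must be indexed by non-empty partial injections from the set of fresh coordinate positions into $\bar a\setminus\bar b$, not by subsets of $\bar a\setminus\bar b$, since one must record both which positions are filled and the injective assignment), the paper instead inducts on the number of fresh coordinates, writing $\chi_R=\chi_{\hat R}-\sum_i\chi_{R_i}$ with $\hat R$ the relaxed superset and each $R_i$ an $\bar a$-orbit with strictly fewer fresh coordinates; and for ordered atoms the paper observes that every $\bar a$-orbit of $\atoms\choose k$ is already supported by at most $2k$ atoms (the interval endpoints), so the correction terms vanish and no recursion is needed at all.
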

\begin{proof}
To simplify notation, we assume that $Q$ is equivariant. By Theorem~\ref{thm:orbit-finitely-spanned-closure} this implies the general case anyway, because every orbit-finite set $Q$ is contained in some equivariant orbit-finite set $\bar{Q}$, hence $\forms Q$ is a subspace of $\forms {\bar{Q}}$. Furthermore, for the same reasons, it is safe to assume that $Q$ is a single-orbit set.

We need to exhibit an orbit-finite set $\Phi$ of finitely supported functions, so that every finitely supported function $f:Q\to\field$ is a linear combination of functions from $\Phi$. It is enough to show this for the case where $f$ is the characteristic function of a finitely supported set $R\subseteq Q$, since such characteristic functions are easily seen to span the space of all finitely supported functions. So consider such a set $R$, supported by some tuple $\bar a$ of atoms. $R$ is then a disjoint union of $\bar a$-orbits, and it is enough to consider the case where $R$ is a single $\bar a$-orbit.

From here, the arguments for the equality and the ordered atoms begin to differ.

For the case of {\bf ordered atoms}, we know that the single-orbit $Q$ is in equivariant bijection with $\atoms\choose k$ for some number $k$, so it is enough to deal with $Q={\atoms\choose k}$. Let $R$ be the $\bar a$-orbit of some 
\[
	\textstyle r = \{a_1,\ldots,a_k\}\in{\atoms\choose k}.
\]
Each $a_i$ belongs to some $\bar a$-orbit of atoms. This orbit is an interval, so it is equal to the orbit determined by at most two atoms from $\bar a$. (One atom is enough if $a_i$ belongs to $\bar a$, or if it is larger than (or smaller than) every atom in $\bar a$.) Picking these atoms for each $i=\{1,\ldots, k\}$, we obtain a tuple $\bar b\subseteq \bar a$ of at most $2k$ atoms, such that $R$ is equal to the $\bar b$-orbit of $r$. As a consequence, $R$ is supported by $\bar b$.

As a result, we may take $\Phi$ to be the set of characteristic functions of subsets of $Q$ supported by at most $2k$ atoms. Since $k$ is fixed for a given $Q$, this is an orbit-finite set.

The argument for the {\bf equality atoms} is only a little more complicated. Here for $\Phi$ we take the set of characteristic functions of subsets of $Q$ supported by at most $k$ atoms. 

We know that for some number $k$ there is an equivariant surjection from $\atoms^{(k)}$ to $Q$ , so it is enough to deal with $Q=\atoms^{(k)}$.
Let $R$ be the $\bar a$-orbit of some 
\[
	\textstyle r = (a_1,\ldots,a_k)\in\atoms^{(k)}.
\]
We proceed by induction on the number of atoms in $r$ that are {\em not} in $\bar a$. If this number is zero, then $R$ is the singleton of $r$ and the characteristic function of $R$ is in $\Phi$, so there is nothing to do. If there are some atoms in $r$ that are not present in $\bar a$, denote by $I\subseteq\{1,\ldots,k\}$ the set of coordinates where these atoms occur in $r$. Then an $s\in \atoms^{(k)}$ belongs to $R$ if and only if:
\begin{itemize}
\item[(i)] $s$ is equal to $r$ on all coordinates apart from those in $I$,
\item[(ii)] atoms in $s$ on coordinates from $I$ do not belong to $\bar a$.
\end{itemize}
Let $\hat R\subseteq Q$ be the set of tuples $s$ that satisfy condition (i) above; obviously $R\subseteq\hat R$.
Moreover, $\hat R$ has a support of size $k-|I|$, so $\chi_{\hat R}$, the characteristic function of $\hat R$, belongs to $\Phi$.

Let $\{r_1,\ldots, r_n\}$ be the set of all $k$-tuples in $\atoms^{(k)}$ that can be obtained from $r$ by replacing some (not necessarily all, but at least one) atoms on coordinates from $I$ by some atoms from $\bar a$. Let $R_i$ be the $\bar a$-orbit of $r_i$, for each $i$. The sets $R_i$ are pairwise disjoint, and their union is equal to the difference $\hat{R}\setminus R$, therefore there is a linear equation:
\[
	 \chi_R = \chi_{\hat R}-\sum_{i=1}^n\chi_{R_i}.
\]
By the inductive assumption, each $\chi_{R_i}$ is a linear combination of functions from $\Phi$, which completes the proof.
\end{proof}

\begin{corollary}\label{cor:dual}
    Assume the equality or ordered atoms.
    If $V$ is an  orbit-finitely spanned vector space, then the same is true for its finitely supported dual.
\end{corollary}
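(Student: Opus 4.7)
The plan is to reduce the dual of $V$ to a (finitely supported) subspace of $\forms Q$ for a suitable orbit-finite spanning set $Q$, and then invoke Theorem~\ref{thm:forms-orbit-finite} together with the subspace closure property from Theorem~\ref{thm:orbit-finitely-spanned-closure}.

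First I would choose, using that $V$ is orbit-finitely spanned, an orbit-finite subset $Q\subseteq V$ whose linear span is $V$. Let $\bar a$ be a tuple of atoms supporting both $V$ and $Q$. The inclusion $Q\hookrightarrow V$ extends uniquely to a surjective $\bar a$-supported linear map
\[
	\pi\colon \lin Q \twoheadrightarrow V.
\]
Precomposition with $\pi$ yields an $\bar a$-supported linear map $\pi^*\colon V^{\mathrm{fs}} \to (\lin Q)^{\mathrm{fs}}$ between finitely supported duals, and this map is injective because $\pi$ is surjective.

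Next I would identify $(\lin Q)^{\mathrm{fs}}$ with $\forms Q$ via the standard isomorphism sending a finitely supported linear map $\lin Q\to\field$ to its restriction to $Q$ (and, conversely, extending a finitely supported function $Q\to\field$ linearly). This bijection is equivariant, so $(\lin Q)^{\mathrm{fs}}$ and $\forms Q$ are isomorphic as vector spaces with atoms. By Theorem~\ref{thm:forms-orbit-finite}, $\forms Q$ is orbit-finitely spanned, so $(\lin Q)^{\mathrm{fs}}$ is as well.

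Finally, the image of $\pi^*$ is an $\bar a$-supported subspace of the orbit-finitely spanned space $(\lin Q)^{\mathrm{fs}}$. By Theorem~\ref{thm:orbit-finitely-spanned-closure}, together with the extension to finitely supported subspaces via Lemma~\ref{lem:finsup-length} (as noted after that theorem, this extension is available for the equality and ordered atoms), this subspace is itself orbit-finitely spanned. Since $\pi^*$ is an injection, $V^{\mathrm{fs}}$ is isomorphic to this subspace and hence orbit-finitely spanned. The main point to be careful about is that $\pi^*$ need not be equivariant in general, which is why we rely on the finitely supported version of the subspace closure property rather than the equivariant one.
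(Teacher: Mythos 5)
Your proof is correct and takes essentially the same route as the paper: embed the finitely supported dual $V^{\mathrm{fs}}$ into $\forms Q$ (by precomposing with the surjection $\lin Q\twoheadrightarrow V$ and identifying $(\lin Q)^{\mathrm{fs}}$ with $\forms Q$), then invoke Theorem~\ref{thm:forms-orbit-finite} and closure under subspaces. You have simply spelled out the embedding and the finitely-supported caveat that the paper leaves implicit.
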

\begin{proof}
Recall that the finitely supported dual of $V$ is the vector space of linear maps from $V$ to $\field$.
    If $V$ is spanned by an orbit-finite set $Q$, then its finitely supported dual embeds into the space $\forms Q$ by precomposing with the inclusion of $Q$ in $V$; the latter space is orbit-finitely spanned by Theorem~\ref{thm:forms-orbit-finite}.
    The corollary follows since, by Theorem~\ref{thm:orbit-finitely-spanned-closure}, orbit-finitely spanned vector spaces are closed under subspaces.
\end{proof}

In contrast to the finite-dimensional case, orbit-finitely spanned vector spaces are not isomorphic to their finitely supported duals. Indeed, Example~\ref{ex:forms-atoms} shows that
    the finitely supported dual of $\lin \atoms$ is equivariantly isomorphic to the space $\lin (1+ \atoms)$.
    There is no linear isomorphism between them which is equivariant, or even finitely supported.

The following example shows that the restriction to equality or ordered atoms was important in 
Theorem~\ref{thm:forms-orbit-finite}. This example hints on the difficulties one may encounter when generalizing 
our theory to, say, arbitrary oligomorphic atom structures.

\begin{example}
    \label{ex:counter-example-graph-atoms}
    Consider the graph atoms~\cite[Section 7.3.1]{bojanczyk_slightly2018}, i.e.~the case when the atom structure~$\atoms$ is the Rado graph. This is an undirected graph with the key property that for all finite sets $U\subseteq S\subseteq\atoms$ there is an atom $a\in\atoms$ that has an edge to every atom in $U$ and does not have an edge to any atom in $S\setminus U$.
    
We will show that the vector space $\forms \atoms$, for $\field=\Int_2$, does not have an orbit-finite spanning set.
    
Assume to the contrary that such an orbit-finite spanning set $X\subseteq \forms\atoms$ exists. Let $k$ be a number such that every function in $X$ is supported by at most $k$ atoms; such a number exists since $X$ is orbit-finite. Fix any finite set $T$ of more than $k$ atoms. Define $\chi_T : \atoms  \to \field $ to be the characteristic function of the common neighborhood of $T$, i.e.~the function which maps an atom to 1 if it has an edge to all atoms in $T$, and otherwise maps the atom to 0. This function is supported by $T$, so it belongs to $\forms \atoms$. 
    
Since the whole space is spanned by $X$, there must be a linear equation:
    \begin{equation}\label{eq:chigraph}
    	\chi_T = \sum_{i=1}^nf_i
    \end{equation}
with $f_i\in X$. Each $f_i$ is supported by some set $S_i\subseteq\atoms$ with $|S_i|\leq k$. Denote
\[
	S = T\cup\bigcup_{i=1}^nS_i.
\]
For every subset $U\subseteq T$, pick an atom $a_U$ such that:
\begin{itemize}
\item $a_U$ is a neighbour of every atom in $U$, and
\item $a_U$ is {\em not} a neighbour of any atom in $S\setminus U$.
\end{itemize}
Such an atom exists by the universal property of the Rado graph. 

Note that $\chi_T(a_U)=1$ if and only if $U=T$. 

Now consider a function $f_i$ from~\eqref{eq:chigraph}, supported by $S_i$ as above. Pick some atom $b\in T\setminus S_i$; this is possible since $|T|>|S_i|$. Since $f_i$ is supported by $S_i$, the value of $f_i$ on any atom $a$ depends only on how $a$ is connected by edges to atoms in $S_i$. This means that, for any $U\subseteq T$, 
\[
	f_i(a_{U\setminus\{b\}}) = f_i(a_U) = f_i(a_{U\cup\{b\}})
\]
(for every $U$ exactly one of these equations is trivial). As a result, the number of $U$'s for which $f_i(a_U)=1$ is even. Summing up~\eqref{eq:chigraph} over all $U\subseteq T$ we therefore obtain:
\[
	1 = \sum_{U\subseteq T}\chi_T(a_U) \stackrel{\eqref{eq:chigraph}}{=}  \sum_{U\subseteq T}\sum_{i=1}^nf_i(a_U) 
	=  \sum_{i=1}^n\sum_{U\subseteq T}f_i(a_U) = \sum_{i=1}^n0 = 0,
\] 
a contradiction.
\end{example}

\section{Minimization}
\label{sec:minimisation}
\schutz's  original paper on weighted automata contained  a minimization procedure. We  now describe a version of that procedure in the orbit-finite setting using the theory of orbit-finitely spanned vector spaces from Section~\ref{sec:vector-spaces}.  

Consider a weighted orbit-finite automaton $\Aa$ with states $Q$. We assume that the automaton is reachable, in the following sense: every vector in $\lin Q$ is a linear combination of configurations corresponding to input words.
For a vector $v \in \lin Q$, define the weighted language of $v$ to  be the  weighted language  recognized by the automaton obtained from $\Aa$ by setting the initial map to $v$, i.e.~the initial weight of a state is its coefficient in the vector $v$. Define the \emph{syntactic congruence} $\sim$ to be the equivalence relation on $\lin Q$ which identifies two vectors if the corresponding weighted languages are equal. It is not hard to see that  $\sim$ is closed under both linear combinations and applying atom automorphisms; as a result, one can speak of an equivariant quotient vector space
$(\lin Q)/_{\sim}$.
Equivalently, this is the quotient of the vector space $\lin Q$ under the subspace which consists of vectors $v$ whose corresponding weighted language is~0 everywhere. This quotient space is orbit-finitely spanned by the equivalence classes of states in~$Q$. 

In the finite dimensional case studied by \schutz, a minimal automaton is obtained by choosing some basis for this vector space, and using it as the state space of the minimal automaton. This idea, however, will not work in the orbit-finite setting, due to the difficulties with finding a basis that were described in Examples~\ref{ex:no-equivariant-basis} and~\ref{ex:no-orbit-finite-basis}.

\begin{example}\label{ex:does-not-minimize} For the equality atoms and the field of rational numbers, consider the weighted language $L \colon \atoms^* \to \field$:
    \[
    L(w) =
    \begin{cases}
    1 & \text{if } w = a b a \text{ for some } a \neq b \in \atoms, \\
    -1 & \text{if } w = a b b \text{ for some } a \neq b \in \atoms, \\
    0 & \text{otherwise.}
    \end{cases}
    \]
    This is recognized by a weighted orbit-finite automaton where the set of states $Q$ is 
    \[  \myunderbrace{\set \perp}{initial weight 1\\
    \scriptsize final weight 0}
    \quad + \quad
      \myunderbrace{ \atoms + \atoms^{(2)}}{initial weight 0\\
    \scriptsize final weight 0}
    \quad + \quad
    \myunderbrace{\set \top}{initial weight 0\\
    \scriptsize final weight 1} \]
    and  the  transitions with nonzero weight are 
    \[ \myunderbrace{(\perp, a, a),\ (a, b, (a,b)),\ ((a,b), a, \top)}{weight 1},\ \myunderbrace{((a,b), b, \top)}{weight -1} \]
    for every $a \neq b \in \atoms$.  For the syntactic congruence $\sim$, it is not hard to see that 
    $(a,b)\sim -(b,a)$ 
    for every $a \neq b$. 
    The quotient space $\lin Q /_\sim$ is:
    \[  \ \lin(\,\set{\perp, \top} + \atoms\,) \ \oplus \ X, \]
    where $X$ is the space from Example~\ref{ex:no-orbit-finite-basis}.
   This space has no finitely supported basis.
\end{example}

This example shows that  in the orbit-finite setting  the minimization procedure can leave  the realm of weighted orbit-finite automata as defined in Definition~\ref{def:weighted-orbit-finite}.
To overcome this issue, we use an alternative model for weighted automata,  which we call \emph{orbit-finitely spanned automata}. These are deterministic automata where the state spaces are orbit-finitely spanned vector spaces and all weight functions are linear maps. 

\begin{definition}
    [Orbit-finitely spanned automaton] \label{def:weighted-finite} An orbit-finitely spanned automaton consists of:
    \begin{enumerate}
        \item an orbit-finite input alphabet $\Sigma$;
        \item an orbit-finitely spanned vector space $V$;
        \item a finitely supported transition function
        \begin{align*}
        \delta : V  \times \Sigma \ \to \ V
        \end{align*}
        such that $v \mapsto \delta(v,a)$ is a linear map for every  $a \in \Sigma$;
        \item an initial vector $v_0 \in V$;
        \item a  finitely supported linear map  $F : V \to \field$.
    \end{enumerate}
\end{definition}
An orbit-finitely spanned automaton recognizes a weighted language as expected: given an input word, it computes an element of $\field$ by starting in the initial vector, then applying the transition functions corresponding to the input letters, and finally applying the final function $F$.

\begin{theorem} \label{thm:weighted-linear} Orbit-finitely spanned automata and weighted orbit-finite automata recognize the same weighted languages.
\end{theorem}
\begin{proof}
The proof resembles an analogous construction in the finite-dimensional setting, with one important difference. If one converts a weighted orbit-finite automaton with states $Q$ to an orbit-finitely spanned automaton in the natural way, then the resulting vector space  $\lin Q$ has an orbit-finite basis. Not every orbit-finitely spanned automaton arises this way though, because we do not require the vector space to have an orbit-finite basis. A key step in the proof is that every orbit-finitely spanned automaton is equivalent to one whose state space has a basis. A detailed argument is provided in Appendix~\ref{app:linear-weighted}.
\end{proof}

One advantage of orbit-finitely spanned automata is that they can be minimized. Define a \emph{homomorphism} between orbit-finitely spanned automata $\Aa$ and $\Bb$ to be a finitely supported linear map from the state space of $\Aa$ to the state space of $\Bb$, which is consistent with the structure of the automata in the natural way, see~\cite[Sec.~6.2]{bojanczyk_slightly2018}. If $\Aa$ is an orbit-finitely spanned automaton that is reachable (i.e.~its vector space is spanned by vectors that can be reached via input words), and $\sim$ is its syntactic congruence (defined in the same way as for weighted orbit-finite automata), then there is a well defined quotient automaton $\Aa_{/\sim}$. This construction, dating back to Sch\"utzenberger's original paper~\cite{schutzenberger1961definition} and described in detail e.g.~in~\cite[Thm.~8.4]{bojanczyk_toolbox}, does not depend on $\Aa$ being orbit-finitely spanned; in fact it applies to any weighted language over any alphabet. However, if $\Aa$ is orbit-finitely spanned then so is  $\Aa_{/\sim}$. The quotient automaton admits a (surjective) homomorphism from every reachable orbit-finitely spanned automaton that recognizes the same weighted language as $\Aa$. This property uniquely defines $\Aa_{/\sim}$ up to isomorphism, and so it can be called \emph{the} minimal automaton.

\smallskip

We finish this section with a third perspective on weighted languages, this time phrased in terms of monoids.
Define an \emph{orbit-finitely spanned monoid} to be a monoid $(M, \cdot, 1)$ where the underlying set $M$ is an orbit-finitely spanned vector space, and the monoid operation is bi-linear (i.e.~linear in each of the two coordinates).
We say that a weighted language $L \colon \Sigma^* \to \field$ is recognized by such a monoid if  the diagram
\[
\xymatrix@ur{
    \Sigma^* 
    \ar[r]^h
    \ar[dr]_L &
    M 
    \ar[d]^F \\
    & \field 
}\vspace{-5ex}
\]
commutes for some finitely supported monoid homomorphism $h$ and some finitely supported linear map $F$.  

The following result is somewhat unexpected, because in the non-weighted setting, orbit-finite automata and orbit-finite monoids do not recognize the same languages~\cite[Exercise 91]{bojanczyk_slightly2018}. 
\begin{theorem}\label{thm:monoids}
    Orbit-finitely spanned  monoids recognize the same weighted languages as weighted orbit-finite automata and  orbit-finitely spanned automata.
\end{theorem}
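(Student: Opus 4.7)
The plan is to establish the two nontrivial inclusions separately; by Theorem~\ref{thm:weighted-linear}, this amounts to showing that a weighted language is recognized by an orbit-finitely spanned monoid if and only if it is recognized by an orbit-finitely spanned automaton.

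\textbf{From monoids to automata.} Given $L = F \circ h$ for a finitely supported monoid homomorphism $h : \Sigma^* \to M$ and a finitely supported linear $F : M \to \field$, the orbit-finitely spanned automaton whose state space is $M$ itself, whose initial vector is $1_M$, whose transition on letter $a$ is right multiplication by $h(a)$, and whose final map is $F$, recognizes $L$: each transition is linear because the monoid product of $M$ is bilinear, and a trivial induction shows that after reading $w$ the state is $h(w)$, producing output $F(h(w)) = L(w)$.

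\textbf{From automata to the syntactic monoid.} Starting from a weighted orbit-finite automaton for $L$, I would build the syntactic monoid directly. Extend $L$ linearly to $\lin \Sigma^* \to \field$ and define the two-sided congruence
\[
    \alpha \sim \beta \iff L(x\alpha y) = L(x\beta y) \text{ for every } x, y \in \Sigma^*.
\]
Since $\sim$ is an ideal-congruence of the convolution algebra $\lin \Sigma^*$, the quotient $M = \lin \Sigma^* / \sim$ inherits a bilinear monoid product, together with an obvious finitely supported monoid morphism $h : \Sigma^* \to M$ and a finitely supported linear map $F' : M \to \field$ defined by $F'([\alpha]) = L(\alpha)$, which together recognize $L$.

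\textbf{The main obstacle} is to show that $M$ is orbit-finitely spanned. For this I would use the one-sided quotients
\[
    V_L = \lin \Sigma^* / K_L, \qquad V_R = \lin \Sigma^* / K_R,
\]
where $K_L = \{\alpha : L(\alpha y) = 0 \text{ for every } y\}$ and $K_R = \{\beta : L(x\beta) = 0 \text{ for every } x\}$. The space $V_L$ is a quotient of the state space of the original automaton, hence orbit-finitely spanned by Theorem~\ref{thm:orbit-finitely-spanned-closure}; and $V_R$ arises symmetrically from the ``dual automaton'' whose state space is the finitely supported dual of the original (with transitions given by adjoints acting in the reversed order), which is orbit-finitely spanned by Corollary~\ref{cor:dual}. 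Writing $(-)^\star$ for the finitely supported dual, the assignment
\[
    M \to (V_L \otimes V_R)^\star, \qquad [\alpha] \mapsto \big([x] \otimes [y] \mapsto L(x\alpha y)\big)
\]
is well-defined (by the definitions of $K_L$, $K_R$ and $\sim$), finitely supported (inheriting the support of $L$), linear in $[\alpha]$, and injective because $L(x\alpha y) = 0$ for all $x, y$ forces $\alpha \sim 0$. Its codomain is orbit-finitely spanned by Theorem~\ref{thm:orbit-finitely-spanned-closure} (tensor) together with Corollary~\ref{cor:dual} (dual), so the finitely supported version of the closure theorem (available via Lemma~\ref{lem:finsup-length}) applied to $M$ as a finitely supported subspace gives that $M$ is orbit-finitely spanned, completing the proof.
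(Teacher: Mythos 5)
Your proof is correct but takes a genuinely different route from the paper's. For the direction from automata to monoids, the paper gives a direct, elementary construction: the recognizing monoid is the space of row-finite functions $f \in \forms{(Q \times Q)}$ (those with finitely many nonzero entries in each row), with matrix multiplication as the bilinear product and the homomorphism sending a word to its transition matrix; orbit-finite spannedness then follows immediately from Theorem~\ref{thm:forms-orbit-finite} together with closure under subspaces. You instead construct the \emph{syntactic} monoid $M = \lin\Sigma^*/\!\sim$ and embed it into $(V_L \otimes V_R)^\star$, invoking closure of orbit-finitely spanned spaces under subspaces, tensor products and finitely supported duals (Theorem~\ref{thm:orbit-finitely-spanned-closure}, Corollary~\ref{cor:dual}) and their finitely supported versions via Lemma~\ref{lem:finsup-length}. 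Your approach is conceptually more informative, since it shows that the canonical minimal monoid itself is orbit-finitely spanned rather than merely exhibiting some recognizing monoid, and it meshes nicely with the minimization viewpoint of Section~\ref{sec:minimisation}; the price is heavier machinery (duals and tensors) where the paper needs only $\forms{(Q\times Q)}$. One small imprecision worth fixing: $V_L$ is a quotient not of $\lin Q$ itself but of its \emph{reachable} subspace (the configurations span only a subspace of $\lin Q$), and likewise $V_R$ is a quotient of the subspace of $\forms Q$ reached by the adjoint transitions started from the final functional; both are still orbit-finitely spanned by Theorem~\ref{thm:orbit-finitely-spanned-closure}, so the argument goes through unchanged.
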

\begin{proof}
    From an orbit-finitely spanned monoid we can easily construct an orbit-finitely spanned automaton, with the same underlying vector space. For the converse direction, starting with a weighted orbit-finite automaton with states $Q$, we build a monoid out of finitely supported functions:
    \begin{align*}
    f  : \forms {(Q \times Q)}
    \end{align*}
    such that for every $p \in Q$ there are finitely many states $q \in Q$ such that $f(p,q) \neq 0$. By Theorem~\ref{thm:forms-orbit-finite}, this vector space is orbit-finitely spanned. The monoid operation is defined by:
    \begin{align*}
    (f \cdot g) (p,q) = \sum_{r \in Q} f(p,r)\cdot g(r,q),
    \end{align*}
    with the sum being finite by the assumption on $f$. This operation is finitely supported and bi-linear. The recognizing homomorphism is built using the same construction as when converting a nondeterministic automaton into a monoid.
\end{proof}

An advantage of the monoid approach is the symmetry between reading the input word left-to-right and right-to-left. In particular, the languages recognized by orbit-finitely spanned monoids are easily seen to be closed under reversals; this is harder to see  for the remaining models. 

\section{Application to  unambiguous automata}
\label{sec:unamb}
A classical application of weighted automata is a polynomial-time algorithm for language equivalence of unambiguous finite automata, i.e., nondeterministic automata with at most one accepting run for every input word.
Two unambiguous finite automata are equivalent (i.e.~they recognize the same language)  if and only if they have the same number of accepting runs for every input word (since the number of accepting runs is zero or one).
For every nondeterministic finite automaton, one can easily construct in polynomial time a weighted finite automaton which maps every input word to the number of accepting runs of the nondeterministic finite automaton; and therefore two unambiguous finite automata are equivalent if and only if the  corresponding weighted finite automata are equivalent.

In this section, we show how this result can be lifted from finite to orbit-finite automata. Consider first the case of unambiguous orbit-finite automata which are non-guessing, in the sense that they have finitely many initial states, and for every state $q$ and input letter $\sigma$, there are finitely many transitions of the form $(q,a,p)$. As explained in  Example~\ref{ex:count-runs}, for such automata we can easily count runs using a weighted orbit-finite automaton, and thus we can solve the language equivalence problem in the same time as in Theorem~\ref{thm:equivalence-orbit-finite-equality}. However, our techniques apply also to unambiguous orbit-finite automata without the non-guessing restriction.  

\begin{theorem}\label{thm:unambiguous-automata}
    Assume that the atoms are either $(\Nat,=)$ or $(\Rat,<)$.
    The equivalence problem for equivariant\footnote{As for Theorem~\ref{thm:equivalence-orbit-finite-equality}, this theorem would also work for finitely supported automata, but the notation would become more involved.} unambiguous register  automata\footnote{We formulate the theorem using unambiguous register automata, see~\cite[Sect.~2]{MottetQ19} and not for general unambiguous orbit-finite automata, so that it can be more easily compared to existing results in the literature. However, the entire proof works for the general case.}, which are allowed to use guessing, is in \exptime, and in polynomial time when the  number of registers is fixed.
\end{theorem}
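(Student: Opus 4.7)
The plan is to reduce language equivalence of unambiguous register automata to equivalence of weighted orbit-finite automata and then apply Theorem~\ref{thm:equivalence-orbit-finite-equality}. The classical lift works cleanly when the input automaton is non-guessing: build, exactly as in Example~\ref{ex:count-runs}, the weighted orbit-finite automaton with the same state space and weights $0/1$ lifted from the initial, transition, and final predicates of $\Aa$. Unambiguity ensures that the resulting weighted output lies in $\set{0,1}$ and coincides with the characteristic function of $L(\Aa)$; equivalence of two such automata therefore reduces to equivalence of the derived weighted orbit-finite automata, and since the lift preserves both orbit count and atom dimension, Theorem~\ref{thm:equivalence-orbit-finite-equality} immediately gives the claimed complexity.

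With guessing allowed the naive lift breaks condition (*) of Definition~\ref{def:weighted-orbit-finite}: a single $(q,\sigma)$ may have an entire orbit of successors of nonzero weight. I would therefore build, instead of a weighted orbit-finite automaton, an orbit-finitely spanned automaton $\Bb$ in the sense of Definition~\ref{def:weighted-finite}, which imposes no analogue of (*). The state space of $\Bb$ is taken as an equivariant subspace of $\forms Q$; by Theorem~\ref{thm:forms-orbit-finite} this ambient space is orbit-finitely spanned with atom dimension bounded by a constant multiple of that of $Q$, and by Theorem~\ref{thm:orbit-finitely-spanned-closure} so is any equivariant subspace. The initial vector, transition maps and final projection are the natural linear liftings of the initial set, transition relation, and final set of $\Aa$; unambiguity of $\Aa$ is precisely what forces the final output to be $\set{0,1}$-valued on every reachable configuration, so the weighted language of $\Bb$ is the indicator of $L(\Aa)$. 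With $\Bb$ in hand, one either converts it back to a weighted orbit-finite automaton by Theorem~\ref{thm:weighted-linear} and applies Theorem~\ref{thm:equivalence-orbit-finite-equality}, or reruns the argument of Section~\ref{sec:equivalence} directly for orbit-finitely spanned state spaces, using Lemma~\ref{lem:finite-length-for-ofs} in place of the finite length property for $\lin Q$.

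The main obstacle is making the guessing construction of $\Bb$ precise: the subspace of $\forms Q$ and the transition maps must be chosen so that every linear map involves only finite sums on the reachable subspace, is finitely supported, and keeps the atom dimension within $O(k)$. Unambiguity is essential here; without it, run-counting maps into $\forms Q$ can produce functions that are finitely supported but pointwise infinite, and the reachable subspace may fail to be orbit-finitely spanned. All remaining steps are direct applications of the theory developed in Sections~\ref{sec:length} and~\ref{sec:vector-spaces}, so once the definition of $\Bb$ is pinned down the complexity bound falls out essentially for free.
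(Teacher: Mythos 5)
Your high-level strategy matches the paper: the paper also builds, for each unambiguous automaton $\Aa$, an orbit-finitely spanned automaton $\Bb$ whose state space lives in $\forms Q$ and which counts accepting runs, invokes the closure of orbit-finitely spanned spaces under equivariant subspaces (Theorem~\ref{thm:orbit-finitely-spanned-closure}) to keep $\Bb$'s state space orbit-finitely spanned, and then reruns the length-based argument from Section~\ref{sec:equivalence}. But the proposal leaves two of the genuinely hard steps unresolved, and one of the proposed shortcuts would not work as stated.

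First, you attribute the pointwise finiteness of the run-counting configurations $[w]\in\forms Q$ solely to unambiguity, but unambiguity alone does not guarantee it: with guessing, a state $q$ can be reached via infinitely many runs over the same word as long as none of them are extendable to accepting runs. The paper resolves this by first \emph{trimming} the automaton (assuming every state can reach an accepting state); only then does ``finitely many accepting runs per word'' imply ``finitely many runs ending in $q$'' for each fixed $q$, hence $[w]$ is a genuine element of $\forms Q$. This preprocessing step is indispensable and is missing from your argument.

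Second, you flag ``the main obstacle is making the transition maps precise'' but do not actually address it. The transition function $\delta_\sigma$ is defined by
$\sum_i \alpha_i [w_i] \mapsto \sum_i \alpha_i [w_i\sigma]$,
and well-definedness on the span of configurations is not automatic: one must show that if two finite linear combinations of configurations agree as vectors, their images under $\delta_\sigma$ also agree. The paper proves this coordinate-by-coordinate, using the finitely-many-accepting-runs hypothesis to argue that for each fixed $q$ only finitely many predecessor states $p$ contribute. Without this argument the automaton $\Bb$ is not actually defined, so it is a genuine gap rather than a routine detail.

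Third, the alternative you offer of converting $\Bb$ back to a weighted orbit-finite automaton via Theorem~\ref{thm:weighted-linear} is not how the paper proceeds, and it would likely spoil the complexity bounds: the proof of Lemma~\ref{lem:basic-recognize-same-languages} invokes the Uniformization Lemma, which can increase supports in an uncontrolled way, so the resulting weighted orbit-finite automaton need not have atom dimension $O(k)$. The paper instead reruns the length-bound argument directly on the orbit-finitely spanned automaton $\Bb$, using the bound $2^{\poly(k)}\cdot n^{O(k)}$ on $\length(\forms Q)$ extracted from the proof of Theorem~\ref{thm:forms-orbit-finite}, and then reduces to finite weighted automata as in Lemma~\ref{lem:reduction-to-finite}. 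Your second alternative (``rerun the argument of Section~\ref{sec:equivalence}'') is the correct one; the first should be dropped.
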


This improves on previous work~\cite{MottetQ19,MottetQ19Guessing} in that: (a) we allow unrestricted guessing; (b) we allow ordered atoms and not just equality atoms;
 and (c) we improve the previous upper bounds of  \twoexpspace for an unbounded number of registers and \expspace for a fixed number of registers. 

The rest of this section is devoted to proving the above theorem. The main observation is that an orbit-finitely spanned automaton can count accepting runs for a nondeterministic orbit-finite automaton, as stated in the following lemma. 
\begin{lemma}\label{lem:unambiguous-to-weighted} Consider an equivariant nondeterministic orbit-finite automaton $\Aa$, which has finitely many accepting runs for every input word. There is an equivariant orbit-finitely spanned automaton $\Bb$, over the field of rational numbers, which outputs for every word the number of accepting runs of $\Aa$.  Furthermore, the length of the vector space used by the automaton $\Bb$ is at most 
    \begin{align}\label{eq:length-bound}
        2^{poly(k)} \cdot n^{O(k)}. 
        \end{align}
where $k$ and $n$ are as in Theorem~\ref{thm:equivalence-orbit-finite-equality}.
\end{lemma}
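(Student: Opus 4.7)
The plan is to first construct a ``backward'' orbit-finitely spanned automaton $\Bb'$ recognizing the reversal $L^R$, and then take its dual to obtain a forward orbit-finitely spanned automaton $\Bb$ recognizing $L$. For each state $q \in Q$ and each word $v \in \Sigma^*$, let $N_v(q)$ be the number of accepting runs of $\Aa$ starting at $q$ and reading $v$. By the hypothesis, each $N_v(q)$ is a finite natural number, and $N_v : Q \to \field$ is finitely supported (by the atoms of $v$ and of the accepting states), so $N_v \in \forms Q$.

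Take $V := \lin \{N_v : v \in \Sigma^*\} \subseteq \forms Q$. By Theorem~\ref{thm:forms-orbit-finite}, $\forms Q$ is orbit-finitely spanned for the equality and ordered atoms, so $V$ is orbit-finitely spanned by closure under equivariant subspaces (Theorem~\ref{thm:orbit-finitely-spanned-closure}). The backward automaton $\Bb'$ has state space $V$, initial vector $N_\epsilon$ (the indicator function of the accepting states), transition $\delta'_a(N_v) := N_{av}$ extended linearly, and final functional $c \mapsto \sum_{q \text{ initial}} c(q)$. The transition is well-defined because the recurrence
\[
N_{av}(q) \;=\; \sum_{p \,:\, (q,a,p) \text{ a transition of } \Aa} N_v(p)
\]
is a finite sum by the finite-accepting-runs hypothesis, and because $\sum c_i N_{v_i} \equiv 0$ on $Q$ entails $\sum c_i N_{a v_i} \equiv 0$. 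Reading $w = a_1 \cdots a_n$, $\Bb'$ visits $N_\epsilon, N_{a_1}, N_{a_2 a_1}, \ldots, N_{a_n \cdots a_1}$, and the final functional outputs $L(w^R)$; so $\Bb'$ recognizes $L^R$. Now take $\Bb$ to be the dual automaton: state space $V^*$, initial vector $F_{\Bb'}$, transition sending $\phi \in V^*$ to $\phi \circ \delta'_a$, and final functional $\phi \mapsto \phi(N_\epsilon)$. A direct calculation confirms that $\Bb$ on input $w$ outputs $L(w)$.

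The main obstacle is the length bound. Assume without loss of generality that every state of $\Aa$ lies on some accepting run (useless states contribute nothing to $L$ and can be removed). The bilinear pairing $\lin Q \times V \to \field$ sending $(x,c) \mapsto \sum_q x(q) c(q)$ is well-defined (a finite sum, since $x \in \lin Q$ has finite numerical support) and non-degenerate on both sides by the usefulness assumption. The annihilator map thus sends strict chains of equivariant subspaces of $V$ to strict chains of equivariant subspaces of $\lin Q$, giving $\length(V) \leq \length(\lin Q)$. By Lemmas~\ref{lem:equality-length} and~\ref{lem:ordered-length}, $\length(\lin Q) \leq n \cdot k!(k+1)!$, well within the claimed $2^{\poly k} \cdot n^{O(k)}$ bound. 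A symmetric non-degeneracy argument between $V$ and $V^*$ yields $\length(V^*) = \length(V)$, so the state space of $\Bb$ satisfies the same bound.
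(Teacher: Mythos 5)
The construction of $\Bb'$ and the dualization to get $\Bb$ are fine in outline (and are a slight detour: the paper avoids the dual by working with forward configurations $[w](q)$, the number of runs from an initial state to $q$ on $w$, and directly builds a forward automaton on $\spanvec\{[w]\}\subseteq \forms Q$; your $N_v$ are the backward configurations). The genuine gap is the length bound, and it is not a small one.

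Your argument is: the pairing $\lin Q \times V \to \field$, $(x,c)\mapsto \sum_q x(q)c(q)$, is non-degenerate, hence annihilators send strict chains of equivariant subspaces of $V$ to strict chains in $\lin Q$, hence $\length(V)\le\length(\lin Q)$. This inference is false in the infinite-dimensional equivariant setting, even for a perfectly non-degenerate pairing. Take $Q=\atoms$ (equality atoms) and the full pairing $\lin\atoms\times\forms\atoms\to\field$; it is non-degenerate on both sides (hit a nonzero $x$ with the indicator of an atom in its support, and a nonzero $c$ with the atom of a nonzero coordinate). Your argument, applied verbatim, would yield $\length(\forms\atoms)\le\length(\lin\atoms)=2$. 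But $\forms\atoms$ has the strict equivariant chain
\begin{align*}
0 \ \subsetneq\ \spanvec\{a-b : a\ne b\} \ \subsetneq\ \lin\atoms \ \subsetneq\ \forms\atoms,
\end{align*}
so $\length(\forms\atoms)\ge 3$. Where the annihilator step breaks: for $W_1\subsetneq W_2\subseteq V$ one needs some $x\in W_1^\perp\setminus W_2^\perp$, but $W_1^\perp$ can already be $\{0\}$ (here $\bigl(\spanvec\{a-b\}\bigr)^\perp=\{0\}$ inside $\lin\atoms$, because a finitely supported $x$ with $x(a)=x(b)$ for all $a,b$ must be zero). Non-degeneracy of the ambient pairing does not rule this out. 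There is a second, independent problem: non-degeneracy on the $\lin Q$ side is \emph{not} implied by ``every state lies on an accepting run''. It says precisely that no nonzero $x\in\lin Q$ satisfies $\sum_q x(q)N_v(q)=0$ for all $v$, i.e.\ that distinct linear combinations of states are observationally inequivalent; an automaton can be useful yet have Nerode-equivalent states, giving a nonzero kernel. The same flaw sinks the concluding ``$\length(V^*)=\length(V)$ by symmetry'': the paper explicitly notes that an orbit-finitely spanned space and its finitely supported dual need not be isomorphic, and in any case the annihilator argument is unavailable.

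The paper instead bounds $\length(V)$ from above by $\length(\forms Q)$ (since $V$ is an equivariant subspace of $\forms Q$), decomposes $Q$ into $n$ orbits, and bounds $\length(\forms P)$ for a single orbit $P$ of atom dimension $\le k$ by $2^{\poly k}$, via the explicit orbit-finite spanning set exhibited in the proof of Theorem~\ref{thm:forms-orbit-finite} (characteristic functions of subsets with $O(k)$-sized support). Using Lemma~\ref{lem:SES} this yields $\length(\forms Q)\le n\cdot 2^{\poly k}$, comfortably within the stated bound. To repair your proof you should replace the annihilator argument by this direct bound on $\length(\forms Q)$, and either give a separate length bound for the finitely supported dual $V^*$, or (simpler) drop the reversal-plus-dual detour and run the forward construction on $\spanvec\{[w]\}$ as in the paper.
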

\begin{proof}
    Let $\Sigma$ be the input alphabet, and let $Q$ be the state space of $\Aa$.  Without loss of generality we assume that every state can reach some accepting state; the remaining states can be eliminated from the automaton without affecting  the recognized language or the numbers of accepting runs~\cite[Cor.~9.12]{bojanczyk_slightly2018}.

    For an input word $w \in \Sigma^*$, define its \emph{configuration} 
    \begin{align*}
    [w] \in \forms Q
    \end{align*}
    to be the function which maps each state $q$ to the number of runs on $w$ that begin in an initial state and end in the state $q$. The configuration produces only finite numbers, because  there cannot be a  state $q$ that can be reached via infinitely many runs over the same input word $w$; otherwise we could append some word to $w$ and get infinitely many accepting runs.
    Define 
    \begin{align*}
     V = \spanvec \set{[w] : w \in \Sigma^*} \subseteq \forms Q
    \end{align*}
    to be the subspace of $\forms Q$ that is spanned by configurations. Although the definition of $V$ uses a spanning set that is not necessarily orbit-finite (because $\Sigma^*$ is not orbit-finite), the space~$V$ is orbit-finitely spanned, as an equivariant subspace of an orbit-finitely spanned vector space, see Theorems~\ref{thm:orbit-finitely-spanned-closure} and~\ref{thm:forms-orbit-finite}.  

    We use $V$ as the state space of a orbit-finitely spanned automaton $\Bb$. Let us first prove the length bound~\eqref{eq:length-bound}. It is enough to prove the bound for the length of $\forms Q$, since $V$ is an equivariant subspace of it.
  To this end, note that the set $Q$ can be decomposed as a disjoint union of $n$ single-orbit sets with dimension at most $k$. Since 
        \[
        \length( \formscrammed {(Q_1 \! + \! Q_2)}) =
        \length(\formscrammed {Q_1}) + \length(\formscrammed {Q_2})
        \]
        it is enough to show that every equivariant single-orbit set $P$ of atom dimension at most $k$ satisfies 
        \begin{align*}
        \length ( \forms P) \le 2^{poly(k)}.
        \end{align*}
        Since $\forms P$ embeds into $\forms {\atoms^{(k)}}$, it is enough to show that the length of the latter space is at most exponential in $k$.  This follows from the proof of Theorem~\ref{thm:forms-orbit-finite}.

    We now describe the remaining structure of the orbit-finitely spanned automaton.
     The initial state is the configuration $[\varepsilon]$, which maps initial states to $1$ and non-initial states to $0$. Let us now define the transition functions. 
    For an input letter $\sigma \in \Sigma$, define a function $\delta_\sigma : V \to V$ as follows
    \begin{align*}
     \sum_{i \in I} \alpha_i [w_i]\quad \mapsto \quad  \sum_{i \in I} \alpha_i [w_i \sigma].
    \end{align*}
    We need to justify that this is well-defined. A potential problem  is that the same element of~$V$ might have several decompositions as weighted sums of configurations, and  the  output of~$\delta_\sigma$ should not depend on the choice of decomposition.  
     Consider an element of $V$ with two decompositions: 
        \begin{align*}
            \sum_{w \in W} \alpha_w [w] = 
            \sum_{w \in W} \beta_w [w],
        \end{align*}
        for some finite set $W \subseteq \Sigma^*$ of input words, and some coefficients $\alpha_w,\beta_w\in\field$. 
        We need to show that $\delta_\sigma$ produces the same output for both decomposition, i.e.
        \begin{align}\label{eq:after-sigma}
            \sum_{w \in W} \alpha_w [w \sigma] = 
            \sum_{w \in W} \beta_w [w \sigma].
        \end{align}
        Both sides in~\eqref{eq:after-sigma} are functions from $Q$ to $\field$, so to prove the equality we need to show that both sides give the same output for every state $q\in Q$. For a fixed $q$, let $P$ be the set of states $p \in Q$ such that the automaton has a transition $(p,\sigma,q)$, and furthermore $p$ appears in the configuration of some $w \in W$ with nonzero coefficient.  An important observation is that $P$ is a finite set: because the automaton $\Aa$ has finitely many accepting runs for every input word, the set $P$ contains finitely many states for every word in $W$. For every word $w \in W$ we have 
\begin{align*}
[w\sigma](q) = \sum_{p \in P} [w](p).
\end{align*}
Therefore, to prove~\eqref{eq:after-sigma}, we need to show 
\begin{align*}
    \sum_{\substack{w \in W\\ p \in P}}  \alpha_w [w](p) = 
    \sum_{\substack{w \in W\\ p \in P}}  \beta_w [w](p).
\end{align*}
This equality is indeed true, because our assumption implies a stronger equality, namely that for every $p \in P$ we have 
\begin{align*}
    \sum_{ w \in W}  \alpha_w [w](p) = 
    \sum_{w \in W}  \beta_w [w](p).
\end{align*}

    The function $\delta_\sigma$ is clearly a linear map, and thus we can set the transition function of the automaton to be $\delta(v,\sigma) = \delta_\sigma(v)$. The automaton is defined so that after reading an input word~$w$, its state is $[w]$. The final map simply takes a configuration to the sum of all coefficients that are accepting states. This concludes the proof of Lemma~\ref{lem:unambiguous-to-weighted}.
\end{proof}

\begin{proof}[Proof of Theorem~\ref{thm:unambiguous-automata}]
Consider two unambiguous register automata $\Aa_1$ and $\Aa_2$ for which we want to decide equivalence. 
Apply Lemma~\ref{lem:unambiguous-to-weighted} to each one of them, yielding orbit-finitely spanned automata $\Bb_1$ and $\Bb_2$.
Using a product construction, we get another orbit-finitely spanned automaton $\Bb$ that outputs zero for words where the automata $\Aa_1$ and $\Aa_2$ agree, and nonzero for other words.
The length of the vector space in $\Bb$ is at most twice the length of the vector spaces in $\Bb_1$ and $\Bb_2$, and hence it is at most
    \begin{align}\label{ex:bound-on-linear-length}
        2^{poly(k)} \cdot n^{O(k)}
        \end{align}
        where $k$ is the maximal number of registers used by the automata, and $n$ is the sum of the  numbers of control states.  As in the proof of Theorem~\ref{thm:equivalence-orbit-finite-equality}, we conclude that the automata~$\Aa_1$ and $\Aa_2$ are equivalent if and only if they are equivalent using input words and runs that use a number of atoms as bounded by~\eqref{ex:bound-on-linear-length}, and the latter equivalence can be tested using \schutz's polynomial time algorithm for equivalence on weighted finite automata. 
\end{proof}

\noindent{\bf Acknowledgments.}\\
We are grateful to Arka Ghosh, S\l{}awomir Lasota and Jingjie Yang, who found significant mistakes in previous versions of this paper.

\printbibliography

\newpage
\appendix

\section{Proof of Lemma~\ref{lem:SES}}\label{app:SES}

For item (i), let $g:W\to W/V$ be the (equivariant) quotient map. For the ``$\geq$'' direction, for any chains:
\begin{align*}
V_1\subsetneq \cdots \subsetneq V_n &\subsetneq V && \text{($n$ proper inclusions)} \\
U_1\subsetneq \cdots \subsetneq U_m &\subsetneq W/V &&  \text{($m$ proper inclusions)}
\end{align*}
the chain
\[
V_1\subsetneq \cdots \subsetneq V_n \subsetneq V \subseteq g^{-1}(U_1)\subsetneq \cdots \subsetneq g^{-1}(U_m) \subsetneq W
\]
has $n+m$ proper inclusions. For the ``$\leq$'' direction, for any chain of proper inclusions:
\begin{align}\label{chain1}
	W_1\subsetneq \cdots \subsetneq W_n \subsetneq W
\end{align}
consider chains:
\begin{align}
W_1\cap V &\subseteq \cdots \subseteq W_n \cap V \subseteq V \label{chain2} \\
\overrightarrow{g}(W_0) &\subseteq \cdots \subseteq \overrightarrow{g}(W_n) \subseteq W/V. \label{chain3}
\end{align}
All these inclusions are not necessarily proper. However, if the $i$-th inclusion in the first chain is an equality:
\[
	W_i\cap V = W_{i+1}\cap V
\]
then there is some $v\in W_{i+1}\setminus W_i$ such that $v\not\in V$. Since $V$ is the kernel of $g$, the $i$-th inclusion in the second chain is then proper:
\[
	\overrightarrow{g}(W_i)\subsetneq \overrightarrow{g}(W_{i+1}),
\]
so the length of~\eqref{chain1} does not exceed the sum of lengths of~\eqref{chain2} and~\eqref{chain3}. 

For item (ii), apply item (i) to $W=\lin(P+Q)$ and $V=\lin P$, noting that then $W/V=\lin Q$.

For item (iii), assume an equivariant surjection $q:P\to Q$, let $\bar{q}:\lin P\to \lin Q$ be the unique linear extension of $q$, and apply item (i) to $W=P$ and $V=\ker(\bar q)$, noting that then $W/V=\lin Q$.

\section{Proof of Lemma~\ref{lem:vsgap}}
\label{app:vsgap}

We shall proceed by induction on $n$. In the process we will need to consider a slightly more general notion:
an {\em affine subspace} of $\atoms$ is a set of the form $A = v\oplus V$, for an atom $v\in \atoms$ and a vector subspace $V\subseteq\atoms$. This is called $n$-dimensional if $V$ is.

For any $a\in\D$, let $\atoms^a_0$ be the set of all those atoms that do not contain $a$. This, of course, is a vector subspace of $\atoms$. Similarly, $\atoms^a_1$ is the set of all those atoms that do contain $a$. This is not a vector subspace, but it is an affine subspace of $\atoms$; indeed $\atoms^a_1=v\oplus\atoms^a_0$ for any $v\in\atoms^a_1$. Obviously $\atoms=\atoms^a_0\uplus\atoms^a_1$.

\begin{lemma}\label{lem:vvcapaff}
For any $n$-dimensional affine space $A$, a generator $a\in\D$ and $i=\{0,1\}$, the intersection $\atoms^a_i\cap A$:
\begin{itemize}
\item is empty, or
\item is equal to $A$, or
\item is an $(n-1)$-dimensional affine space.
\end{itemize}
\end{lemma}
\begin{proof}
First, consider the special case where $A$ is a vector space, $A=V$. Then, for $i=0$, $\atoms^a_0$ is a vector space, hence $\atoms^a_0\cap V$ is a vector space as well, and either it is equal to $V$ or it is $(n-1)$-dimensional. For $i=1$, assume $\atoms^a_1\cap V$ is nonempty and pick some $v\in \atoms^a_1\cap V$. Then $\atoms^a_1=v\oplus\atoms^a_0$ and, since $V$ is a vector space, $V=v\oplus V$. Then:
\[
	\atoms^a_1\cap V = (v\oplus\atoms^a_0)\cap (v\oplus V) =
	v\oplus(\atoms^a_0\cap V).
\]
where the second equality is by Lemma~\ref{eq:oplus-cap}(ii).
The intersection on the right is a vector space; it cannot be equal to $V$ (because $v$ is not in it), so it must be $(n-1)$-dimensional.

Now consider the general case of an arbitrary affine space $A$. Let $A=v\oplus V$ for some atom $v$ and an $n$-dimensional vector space $V$. For $i=0$:
\begin{align*}\label{eq:capoplus}
	\atoms^a_0\cap (v\oplus V) = v\oplus(\atoms^a_{v(a)}\cap V),
\end{align*}
where $v(a)\in\{0,1\}$ is the coefficient of $a$ in $v$.
Indeed, for the left-to-right inclusion, take any $w\in V$ such that $(v\oplus w)(a)=0$ or, equivalently, $v(a)=w(a)$. Then $w\in\atoms^a_{v(a)}\cap V$. For the right-to-left inclusion, do the same backwards.

Then use the special case for the vector space $V$ and $i=v(a)$.

The argument for $i=1$ is essentially the same.
\end{proof}

\begin{proof}[Proof of Lemma~\ref{lem:vsgap}]
We prove a more general statement, where $A_1,A_2,\ldots, A_k$ are affine $n$-dimensional subspaces, and not necessarily vector subspaces. We proceed by induction on the dimension $n$. For the base case, $n=1$, observe that for every finite family $A_1,A_2,\ldots, A_k$ of finite-dimensional affine subspaces of $\atoms$, the symmetric difference $A_1\xor A_2\xor\cdots\xor A_k$ has even size. Indeed, each $A_i$ is of even size, and sets of even size are closed under taking symmetric differences.

For the induction step, consider a family of $n$-dimensional affine spaces $A_1,\ldots,A_k$, and denote
\[
	X = A_1\xor\cdots \xor A_k.
\]
Assume that $X$ is nonempty. Then, as we have seen above, is must contain at least two different atoms; pick some generator $a$ on which some atoms in $X$ differ. Then both intersections
\[
	\atoms^a_0\cap X \qquad\text{and}\qquad \atoms^a_1\cap X
\] 
are nonempty.

By Lemma~\ref{eq:cap-diff}(i):
\begin{align*}
 	\atoms^a_0\cap X &= (\atoms^a_0\cap A_1)\xor\cdots\xor(\atoms^a_0\cap A_k), \\
	\atoms^a_1\cap X &= (\atoms^a_1\cap A_1)\xor\cdots\xor(\atoms^a_1\cap A_k).
\end{align*}
By Lemma~\ref{lem:vvcapaff}, both expressions to the right are symmetric differences of $(n-1)$-dimensional affine spaces. Some care is needed when (as is allowed by Lemma~\ref{lem:vvcapaff}), $\atoms^a_i\cap A_j=A_j$. In this case, present $A_j$ as
\[
	A_j = v\oplus V,
\]
use Lemma~\ref{lem:nmxor} to represent $V$ as a symmetric difference of all its $(n-1)$-dimensional subspaces, and apply Lemma~\ref{eq:oplus-diff}(iii). 

By the inductive assumption $\atoms^a_0\cap X$ is either empty or has size at least $2^{n-1}$, and similarly for $\atoms^a_1\cap X$. But we have picked $a$ so that both these intersections are nonempty, so $X$ has size at least $2^n$ as required.
\end{proof}

\section{Equivalence of linear and weighted automata}
\label{app:linear-weighted}
We begin with the easier inclusion.

\begin{lemma}\label{lem:weighted-to-linear}
    For every weighted orbit-finite automaton $\Ww$ there is an orbit-finitely spanned automaton $\Ll$ which recognizes the same weighted language.
\end{lemma}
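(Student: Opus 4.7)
The plan is to take the obvious linearization. Given a weighted orbit-finite automaton $\Ww$ with states $Q$, alphabet $\Sigma$ and weight functions $I, \delta, F$, I would build an orbit-finitely spanned automaton $\Ll$ whose state space is $V = \lin Q$, which is orbit-finitely spanned (in fact, orbit-finitely based) by the set $Q$ itself, viewed as standard basis vectors.

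The four remaining components of $\Ll$ are defined in the only reasonable way. First, the initial vector is $v_0 = \sum_{q \in Q} I(q) \cdot q$, which is a genuine finite linear combination by the non-guessing condition (*). Second, for each letter $\sigma \in \Sigma$ I define the transition function by extending linearly from the formula
\[
    \delta(q, \sigma) \;=\; \sum_{p \in Q} \delta(q, \sigma, p) \cdot p,
\]
which is a finite sum by condition (*), and then extended linearly in the first argument. (Formally: since $Q$ is a basis of $\lin Q$, there is a unique linear map $V \to V$ agreeing with the above formula on basis vectors.) Third, the final map $F : V \to \field$ is the unique linear extension of the original $F : Q \to \field$.

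It remains to check two things: that all these operations are finitely supported, and that $\Ll$ recognizes the same weighted language as $\Ww$. Finite support is inherited directly from the corresponding components of $\Ww$, since linear extension along a finitely supported function produces a finitely supported linear map. For equivalence of the recognized languages, I would prove by induction on $|w|$ that for every input word $w \in \Sigma^*$, the state of $\Ll$ after reading $w$ is exactly the configuration vector
\[
    [w] \;=\; \sum_{q \in Q} \bigl(\text{sum of pre-weights of runs of $\Ww$ on $w$ ending in $q$}\bigr) \cdot q,
\]
and consequently $F([w])$ equals the output of $\Ww$ on $w$. The base case $w = \varepsilon$ reduces to the definition of $v_0$, and the inductive step is a routine unfolding of $\delta$ together with bi-distributivity of sums over the field. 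I do not anticipate a main obstacle: the only thing to be careful about is that all of the sums appearing are actually finite, and this is precisely guaranteed by the non-guessing condition (*) of Definition~\ref{def:weighted-orbit-finite}.
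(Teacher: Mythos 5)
Your construction is exactly the one in the paper: take $V = \lin Q$, let the transition map on basis vectors be the linear extension of the weighted transition relation (well-defined by condition (*)), take $v_0$ from $I$ and $F$ by linear extension, and observe that the state after reading $w$ is the configuration vector. This matches the paper's proof of Lemma~\ref{lem:weighted-to-linear}, which gives the same definitions and invokes condition (*) for well-definedness; your explicit induction on $|w|$ is a routine verification the paper leaves implicit.
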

\begin{proof}
 If $Q$ are the states of $\Ww$, then the vector space of $\Ll$  is $\lin Q$, and the transition function
    \begin{align*}
    \delta : \lin Q \times \atoms \to \lin Q
    \end{align*}
    is defined so that for every input letter $a \in \Sigma$ and states $q,p \in Q$ (which are basis vectors of the vector space $\lin Q$), the coefficient for state $p$ in the vector $\delta(q,a)$ is the weight of the transition $(q,a,p)$.
    In other words, linear map $\delta(\_,a)$ is defined by a $Q \times Q$ matrix where the coefficient in a cell $(q,p)$ is the weight of the transition $(q,a,p)$.
    Note how condition (*) from Definition~\ref{def:weighted-orbit-finite} ensures that the $\delta(q,a)$ is a well-defined vector, in the sense that it has finitely many nonzero coordinates. The initial vector of $\Ll$ consists of the initial weights in $\Ww$, and the final function is the linear extension of the original final weight function.
\end{proof}

\begin{example}
    Recall the weighted orbit-finite automaton from Example~\ref{ex:does-not-minimize}.
    We have already described the state space of the corresponding orbit-finitely spanned automaton:
    \[ V \ = \ \blue{\lin(\,\set{\bot, \top} + \atoms\,)} \ \oplus\  \red{X} , \]
    and we will now define the rest of it.
    We define the transition function on the generators of the state space:
    \begin{align*}
    &\delta(\blue{\bot}, a) = \blue{a} \quad \delta(\blue{\top}, a) = 0 \quad \delta(\blue{a},b) = \red{(a,b) - (b,a)}
    \\ &\delta(\red{(a,b) - (b,a)}, c) = \begin{cases}
        \blue{\top} & \text{if $c=a$,} \\
        \blue{-\top} & \text{if $c=b$,} \\
        0 & \text{otherwise.}
    \end{cases}
    \end{align*}
    Formally, one has to show that $\delta$ is well-defined on $\red{X}$, i.e. it satisfies 
    \[\delta(\red{(a,b) - (b,a)}, c) = -\delta(\red{(b,a) - (a,b)}, c).\]
    Similarly for the final function, we define it for generators:
    \[ F(\blue{\bot}) = 0 \quad F(\blue{\top}) = 1 \quad F(\blue{a}) = 0 \quad F(\red{(a,b) - (b,a)}) = 0 . \]
    The initial vector is simply $v_0 = \blue{\bot}$.
    This automaton accepts the same language $L$ and is in fact minimal.
\end{example}

Call an orbit-finitely spanned automaton {\em basic} if its state vector space has a basis.
For a basic automaton an equivalent weighted orbit-finite automaton can be easily produced, by using the basis as the states.  Therefore, to complete the proof of Theorem~\ref{thm:weighted-linear}, we prove the following:
\begin{lemma}\label{lem:basic-recognize-same-languages}
    For every orbit-finitely spanned automaton, there is a basic one that  recognizes the same weighted language.
\end{lemma}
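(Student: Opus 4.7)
Let $\Ll = (\Sigma, V, \delta, v_0, F)$ be an orbit-finitely spanned automaton, with a finite support $\bar a$, and let $Q \subseteq V$ be an $\bar a$-supported orbit-finite set spanning $V$. Write $\pi : \lin Q \to V$ for the canonical surjective linear map extending the inclusion $Q \hookrightarrow V$. My plan is to equip $\lin Q$ with an orbit-finitely spanned automaton structure $\Ll' = (\Sigma, \lin Q, \delta', v_0', F')$ so that $\pi$ becomes a homomorphism $\Ll' \to \Ll$. The state space $\lin Q$ is basic via the orbit-finite basis $Q$, and since homomorphisms of automata preserve recognized languages, $\Ll$ and $\Ll'$ will recognize the same weighted language.

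The final map and the initial vector are easy: set $F' := F \circ \pi$ and let $v_0' \in \lin Q$ be any single finite $\field$-linear combination of elements of $Q$ equal to $v_0$ in $V$. For the transition function, it suffices to specify $\delta'(q,a) \in \lin Q$ for each $(q,a) \in Q \times \Sigma$ satisfying $\pi(\delta'(q,a)) = \delta(q,a)$ and then extend linearly in the first argument: linearity of $\delta(\_,a)$ on $V$ forces the linear extension to map $\ker \pi$ into itself, which yields $\pi \circ \delta'(\_,a) = \delta(\_,a) \circ \pi$ on all of $\lin Q$. The only subtlety is that the family $\{\delta'(q,a)\}_{(q,a)}$ must be jointly finitely supported.

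To arrange this I would exploit orbit-finiteness of $Q \times \Sigma$. For a suitable finite enlargement $\bar b \supseteq \bar a$, decompose $Q \times \Sigma$ into its finitely many $\bar b$-orbits, pick a representative $(q_0, a_0)$ in each, choose a $\bar b$-supported preimage $u_0 \in \lin Q$ of $\delta(q_0, a_0) \in V$, and extend to the whole orbit by $\bar b$-equivariance. Well-definedness of the extension amounts to the inclusion $\mathrm{Stab}(\bar b) \cap \mathrm{Stab}(q_0, a_0) \subseteq \mathrm{Stab}(u_0)$, which holds automatically since $u_0$ is $\bar b$-supported. Aggregated across the finitely many orbits, this yields a $\bar b$-supported, hence finitely supported, transition function $\delta'$.

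The main obstacle is thus the lifting lemma: for equality or ordered atoms, every $\bar c$-supported vector in an orbit-finitely spanned $V$ admits a $\bar c$-supported preimage in $\lin Q$, provided $\bar c$ is large enough. Naive equivariant lifting fails in general -- for instance, $V = \field$ seen as an equivariant quotient of $\lin \atoms$ has no equivariant preimage for the vector $1$ -- but one can always recover by enlarging the support by finitely many fresh atoms. I would prove the lifting lemma by analyzing the finite-dimensional slices $V_{\bar c}$ and $(\lin Q)_{\bar c}$ of $\bar c$-supported vectors: both are finite-dimensional by oligomorphicity, and for the equality or ordered atoms one can show that the restriction $\pi : (\lin Q)_{\bar c} \to V_{\bar c}$ becomes surjective once $\bar c$ exceeds the atom dimension of $Q$ by a bounded amount, using the explicit orbit structure of these atom families. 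Once the lifting lemma is in place, the construction of $\Ll'$ goes through and $\pi$ is immediately verified to be a homomorphism, so $\Ll'$ is a basic automaton recognizing the same language as $\Ll$.
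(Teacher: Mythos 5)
Your high-level plan is the same as the paper's: lift the automaton structure along the canonical surjection from a free vector space onto $V$, so that the surjection becomes a homomorphism, and conclude via preservation of weighted languages under homomorphisms. The choices of initial vector and final function coincide exactly with the paper's. Where the two proofs diverge is precisely at the step you flag as ``the main obstacle,'' and that is where your proposal has a genuine gap.

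The paper does not attempt to lift directly from $\lin Q$. It first replaces $Q$ by a \emph{polynomial} orbit-finite set $P$ (a finite disjoint union of sets $\atoms^k$) admitting a finitely supported surjection onto $Q$, extends to a surjective linear map $h : \lin P \to V$, and then invokes the \emph{Uniformization Lemma} of~\cite[Lemma~20]{bojanczykSingleUseRegister2019} to select, in one stroke, a finitely supported function $\gamma : P \times \Sigma \to \lin P$ with $h(\gamma(p,\sigma)) = \delta(h(p), \sigma)$. The Uniformization Lemma is a nontrivial, previously proved result, and the paper explicitly notes that it may enlarge the support (paralleling your $\bar b \supseteq \bar a$). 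The restriction to polynomial orbit-finite domains is not a cosmetic detail: the Uniformization Lemma is stated for such domains, and the paper's detour through $P$ is exactly what lets it be applied. You work with an arbitrary orbit-finite $Q$ throughout, which would require a more general uniformization statement.

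The core content you leave unproven is the ``lifting lemma'': that for $\bar c$ large enough, $\pi$ restricted to $\bar c$-supported vectors of $\lin Q$ surjects onto the $\bar c$-supported vectors of $V$. This is not a routine consequence of oligomorphicity. The naive attempts fail: a preimage $u$ of a $\bar c$-supported $v$ is only $\bar e$-supported for some possibly larger $\bar e$; you cannot average over the $\bar c$-orbit of $u$ (it is typically infinite), and ad-hoc ``collapse the extra atoms'' substitutions are not automorphisms and need not respect $\ker \pi$. Your own example $V = \field = (\lin\atoms)/W$ already shows the lift needs to enlarge support; the substantive question is whether a \emph{uniform} enlargement suffices across an entire orbit-finite family, and this is essentially the content of the Uniformization Lemma itself, restated in different language. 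Saying it can be shown ``by analyzing finite-dimensional slices'' and ``using the explicit orbit structure'' does not constitute a proof; that is where all the difficulty of the result lives. So the proposal is structurally on the right track but is not a complete argument — it would become one if the lifting lemma were either proved in detail or replaced, as in the paper, by a citation to the Uniformization Lemma together with the polynomial orbit-finite reduction that makes it applicable.
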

\begin{proof}
    Consider an orbit-finitely spanned automaton $\Aa$ where the state space $V$ is spanned by an  orbit-finite set $Q$. Define a \emph{polynomial orbit-finite  set}, see~\cite[Definition 1]{bojanczykSingleUseRegister2019},  to be any set which is  a finite disjoint union of sets of the form $\atoms^k$.  As for every orbit-finite set, there exists a  polynomial orbit-finite set $P$ with a surjective finitely supported function from $P$ to $Q$.
    Extend this function  to surjective a linear map
    \begin{align*}
    h : \lin P \to V.
    \end{align*}
     We will define a orbit-finitely spanned automaton $\Bb$ with the state space $\lin P$ so that $h$ becomes a homomorphism  of orbit-finitely spanned automata, that is: a finitely supported linear map between the underlying vector spaces, which is consistent with the initial states, transition functions and final functions in the expected way, see~\cite[Sec.~6.2]{bojanczyk_slightly2018}. If two orbit-finitely spanned automata are connected by a homomorphism, then they recognize the same weighted language.
     Therefore, to prove the lemma it remains to define the initial state, transition function and final function in $\Bb$ so that $h$ is a homomorphism.

      For the initial state in $\Bb$ we choose some vector that is mapped by $h$ to the initial state of~$\Aa$, and for the final function we use the composition of $h$ and the final function of $\Aa$. To define the transition function, we construct a finitely supported function $\gamma$ which makes the following diagram commute:
        \begin{align*}
        \xymatrix@C=4cm{
            P \times \Sigma
            \ar[r]^\gamma
            \ar[d]_{(h, id)}
            &
            \lin P
            \ar[d]^h
            \\
            V \times \Sigma
            \ar[r]_{\text{transition function of $\Aa$}} &
            V
        }
        \end{align*}
        To this end, consider the composition of the following relations: the function $(h, id)$, the transition function of $\Aa$, and the inverse of $h$. This is a finitely supported binary relation
        \begin{align*}
        R \subseteq (P \times \Sigma) \times \lin P
        \end{align*}
        such that every element of $P \times \Sigma$ is related with at least one element of $\lin P$ (thanks to the surjectivity of $h$).  By the Uniformization Lemma from~\cite[Lemma 20]{bojanczykSingleUseRegister2019}, there exists a finitely supported function $\gamma$ which is contained in $R$, thus proving the lemma. It is worth pointing out that the Uniformization Lemma changes supports: even if $R$ is equivariant, it could be the case that $\gamma$ needs non-empty support.

          Using linearity, $\gamma$ extends to a finitely supported function
          \begin{align*}
          \bar \gamma : \lin P \times \Sigma \to \lin P
          \end{align*}
          which is a linear map for every fixed input letter; the resulting function can then be used as the transition function for $\Bb$. The commuting diagram above ensures that $h$ is a homomorphism of automata.
\end{proof}

\end{document}